\keywords{Reversible computing; timed systems; process calculi; operational semantics}
\theoremstyle{definition}
\newtheorem{example}[thm]{Example}
\definecolor{shade}{rgb}{0.7,0.7,0.7}
\begin{document}

\title[$\rtpl$: The Reversible Temporal Process Language]{$\rtpl$: The Reversible Temporal Process Language\rsuper*}

\titlecomment{{\lsuper*}This paper is a revised and extended version of~\cite{BocchiLMY22}.}

\thanks{
  This work has been partially supported by the BehAPI project funded by the EU H2020 RISE under the Marie Sklodowska-Curie action (No: 778233), by the EU HEU Marie Sklodowska-Curie action ReGraDe-CS (No: 101106046), by EPSRC project EP/T014512/1 (STARDUST), by MIUR PRIN project NiRvAna,
  by MIUR PRIN project DeKLA,
   by French ANR project DCore ANR-18-CE25-0007, by INdAM -- GNCS 2022 project \emph{Propriet\`a qualitative e quantitative di sistemi reversibili} and GNCS 2023 project \emph{Reversibilit\`a In SIstemi COncorrenti: analisi quantitative e funzionali (RISICO)}, code CUP\_E53C22001930001, and by JSPS KAKENHI Grant Number JP21H03415. We thank the anonymous referees of this paper and of its conference version for their helpful comments and suggestions.}

\author[L. Bocchi]{Laura Bocchi\lmcsorcid{0000-0002-7177-9395}}[a]
\author[I. Lanese]{Ivan Lanese\lmcsorcid{0000-0003-2527-9995}}[b]
\author[C.A. Mezzina]{Claudio Antares Mezzina\lmcsorcid{0000-0003-1556-2623}}[c] 
\author[S. Yuen]{Shoji Yuen\lmcsorcid{0000-0003-2642-0647}}[d]

\address{School of Computing, University of Kent, UK}
\email{L.Bocchi@kent.ac.uk}
\address{Focus Team, University of Bologna/INRIA, Italy}
\email{ivan.lanese@gmail.com}
\address{Dipartimento di Scienze Pure e Applicate, Universit\`a di Urbino, Italy}
\email{claudio.mezzina@uniurb.it}
\address{Graduate School of Informatics, Nagoya University, Japan}
\email{yuen@i.nagoya-u.ac.jp}

\begin{abstract}
Reversible debuggers help programmers to find the causes of misbehaviours in concurrent programs more quickly, by executing a program backwards from the point where a misbehaviour was observed, and looking for the bug(s) that caused it. Reversible debuggers can be founded on the well-studied theory of causal-consistent reversibility, which only allows one to undo an action provided that its consequences, if any, are undone beforehand. Causal-consistent reversibility yields more efficient debugging by reducing the number of states to be explored when looking backwards. 
Till now, causal-consistent reversibility has never considered time, which is a key aspect in real-world applications.
Here, we study the interplay between reversibility and time in concurrent systems via a process algebra. The Temporal Process Language (TPL) by Hennessy and Regan is a well-understood extension of CCS with discrete-time and a timeout operator. We define $\rtpl$, a reversible extension of TPL, and we show that it satisfies the properties expected from a causal-consistent reversible calculus. 
%We discuss the implications of our results in the extension of reversible debuggers with timeouts
We show that, alternatively, $\rtpl$ can be interpreted as an extension of reversible CCS with time.
\end{abstract}

\maketitle

\section*{Introduction}\label{sec:intro}
% !TEX root = main.tex

Recent studies~\cite{undo,cost_deb} show that reversible debuggers ease the debugging phase, and help programmers to
quickly find the causes of a misbehaviour. Reversible debuggers can be built on top of a causal-consistent reversible semantics~\cite{GiachinoLM14,LaneseNPV18,FabbrettiLS21,LamiLSCF22}, and this approach is particularly suited to deal with concurrency bugs, which are hard to find using traditional debuggers~\cite{Gray86}. 
By exploiting causality information, causal-consistent reversible debuggers allow one to undo just the steps which led (that is, are causally related) to a visible misbehaviour, reducing the number of steps/spurious causes and helping to understand the root cause of the misbehaviour. More precisely, one can explore backwards the tree of causes of a visible misbehaviour, possibly spread among different processes, looking for the bug(s) causing it. In the last years several reversible semantics for concurrency have been developped, see, e.g.,~\cite{rccs,ccsk,rpi,rhoTCS,MelgrattiMU19,GiachinoLMT17,LaneseM20,BMezzina20,MedicMPY20,MelgrattiMP21b}. %
However, none of them takes into account {time}\footnote{The notion of time reversibility addressed in~\cite{BMezzina20} is not aimed at studying programming languages with constructs to support hard or soft time constraints, but at performance evaluation via (time-reversible) Markov chains.}.  
Time-dependent behaviour is an intrinsic and important feature of real-world concurrent systems and has many applications:
from the engineering of highways~\cite{MAURO2020808}, to the  
 manufacturing schedule~\cite{manufactoring} and to the scheduling problem for real-time operating systems~\cite{Bertolotti05}. 

Time is instrumental for the functioning of embedded systems where some events are triggered by the system clock. Embedded systems are used for both real-time and soft real-time applications, frequently in safety-critical scenarios. Hence, before being deployed or massively produced, they have to be heavily tested, and hence possibly debugged. 
Actually, debugging occurs not only upon testing, but in almost all the stages of the life-cycle of a software system: from the early stages of prototyping to the post-release maintenance (e.g., updates or security patches). Concurrency is important in embedded systems~\cite{FantGP12a}, and concurrency bugs frequently happen in these systems as well~\cite{embedded_bug}. To debug such systems, and deal with time-dependent bugs in particular, it is crucial that debuggers can handle both concurrency and time.

In this paper, we study the interplay between time and reversibility in a process algebra for concurrency. 
In the literature, there exists a variety of timed process algebras for the analysis and specification of concurrent timed systems~\cite{NicollinS91}.  We build on the Temporal Process Language (TPL)~\cite{tpl}, a CCS-like process algebra featuring an \emph{idling} prefix (modelling a delay) and a \emph{timeout} operator. The choice of TPL is due to its simplicity and its well-understood theory. We define $\rtpl$, a reversible extension of TPL, and we show that it satisfies the properties expected from a causal-consistent reversible calculus. Alternatively, $\rtpl$ can be interpreted as an extension of reversible CCS (in particular CCSK~\cite{ccsk}) with time. 

A reversible semantics in a concurrent setting is frequently defined following the causal-consistent approach~\cite{rccs,LaneseMT14} (other approaches are also used, e.g., to model biological systems~\cite{PhillipsUY12,PhilippouP18}). Causal-consistent reversibility states that any action can be undone, provided that its consequences, if any, are undone beforehand. Hence, it strongly relies on a notion of causality. To prove the reversible semantics of $\rtpl$ causal-consistent, we exploit the theory in~\cite{LanesePU20}, whereby causal-consistency follows from three key properties: 
\begin{description}
\item[Loop Lemma] any action can be undone by a corresponding backward action;
\item[Square Property]concurrent actions can be executed in any order;
\item[Parabolic Lemma] backward computations do not introduce new states.
\end{description}
The application of causal-consistent reversibility to timed systems is not straightforward, since time heavily changes the causal semantics of the language. In untimed systems, causal dependencies are either \emph{structural} (e.g., via sequential composition) or determined by \emph{synchronisations}. In timed systems further dependencies between parallel processes can be introduced by time, even when processes do not actually interact, as illustrated in the following example.
\begin{example}[Motivating example]
\label{ex:1}
Consider the following Erlang code.
% !TEX root = main.tex
\definecolor{dkgreen}{rgb}{0,0.6,0}
\definecolor{gray}{rgb}{0.5,0.5,0.5}
\definecolor{mauve}{rgb}{0.58,0,0.82}
\setlength{\columnsep}{-1cm}
\lstset{ %
  language=erlang,                % the language of the code
  basicstyle=\footnotesize,           % the size of the fonts that are used for the code
  numbers=left,                   % where to put the line-numbers
  numberstyle=\tiny\color{gray},  % the style that is used for the line-numbers
  stepnumber=1,                   % the step between two line-numbers. If it's 1, each line 
                                  % will be numbered
  numbersep=5pt,                  % how far the line-numbers are from the code
  %backgroundcolor=\color{white},      % choose the background color. You must add \usepackage{color}
  showspaces=false,               % show spaces adding particular underscores
  showstringspaces=false,         % underline spaces within strings
  showtabs=false,                 % show tabs within strings adding particular underscores
  %frame=single,                   % adds a frame around the code
  rulecolor=\color{black},        % if not set, the frame-color may be changed on line-breaks within not-black text (e.g. commens (green here))
  tabsize=2,                      % sets default tabsize to 2 spaces
  captionpos=b,                   % sets the caption-position to bottom
  breaklines=true,                % sets automatic line breaking
  breakatwhitespace=true,        % sets if automatic breaks should only happen at whitespace
  title=\lstname,                   % show the filename of files included with \lstinputlisting;
                                  % also try caption instead of title
  keywordstyle=\color{blue},          % keyword style
  commentstyle=\color{dkgreen},       % comment style
  stringstyle=\color{mauve},         % string literal style
%   escapeinside={\%*}{*)},            % if you want to add LaTeX within your code
%   morekeywords={*,...}               % if you want to add more keywords to the set
}
\begin{multicols}{2}
\begin{lstlisting}[language=erlang]
process_A() ->
	receive 
		X -> handleMsg()
		after 200 -> 
			handleTimeout()
	end	
end.
\end{lstlisting}
\columnbreak
\begin{lstlisting}[firstnumber=8]
process_B(Pid) -> 
	timer:sleep(500),
	Pid! Msg	
end.
	
PidA=spawn(?MODULE,process_A,[]),
spawn(?MODULE,process_B,[PidA]).	
\end{lstlisting}
\end{multicols}

Process \lstinline{A} (lines $1-7$) waits for a message; if a message is received within 200 ms, then process \lstinline{A} calls function \lstinline{handleMsg()}, otherwise it calls function \lstinline{handleTimeout()}.
Process \lstinline{B} (lines $8-11$) sleeps for 500ms and then sends a message to \lstinline{Pid}, where \lstinline{Pid} is a parameter of the function executed by process \lstinline{B} (line $8$). The code in line $13$ spawns an instance of process \lstinline{A} and uses its process identifier \lstinline{PidA} as a parameter to spawn an instance of process \lstinline{B} (line $14$).
The two process instances are supposed to communicate, but the timeout in process \lstinline{A} (line $4$) triggers after 200 ms, while process \lstinline{B} will only send the message after 500 ms (lines $9-10$). In this example, the timeout  rules out the execution where process \lstinline{A} communicates with process \lstinline{B}, which would be possible in the untimed scenario. Namely, an execution can become unviable because of a time dependency, without any actual interaction between the two involved processes. \finex
\end{example}
From a technical point of view, the semantics of TPL does not fit the formats for which a causal-consistent reversible semantics can be derived automatically~\cite{ccsk,LaneseM20}, and also the generalisation of the approaches developed in the literature for untimed models~\cite{rccs,rpi,rhoTCS} is not straightforward and is the objective of this work.

 The rest of the paper is structured as follows. Section~\ref{sec:informal} gives an informal overview of TPL and reversibility. Section~\ref{sec:tpl} introduces the syntax and semantics of the reversible Temporal Process Language 
 ($\rtpl$). In Section~\ref{sec:properties}, we relate $\rtpl$ to TPL and CCSK, 
 while Section~\ref{sec:causality} studies the reversibility properties of $\rtpl$. Section~\ref{sec:conc} concludes the paper and discusses related and future work. 
A formal background on CCS, TPL and CCSK (for the readers that wish a more rigorous overview than the informal one in Section~\ref{sec:informal}), as well as longer proofs and additional technical details are collected in Appendix.

This paper is an extended and revised version
of~\cite{BocchiLMY22}. The semantics has been revised since the one
in~\cite{BocchiLMY22} failed to capture some time dependencies when
going back and forward (cf.~Example~\ref{ex:nocc}). We now also provide a better characterisation of the causality model of $\rtpl$ (cf.~Proposition~\ref{prop:backalphasigma} and Theorem~\ref{th:total}).
Further technical improvements include
the formulation of the correspondences
between $\rtpl$, $\tpl$, CCSK and CCS in terms of (bi)simulations
(\cref{BehavForgetH,BehavForgetT}). We now also provide full
proofs of all results as well as additional explanations and examples.
Finally, the whole presentation has been carefully refined.

\section{Informal Overview of TPL and Reversibility}\label{sec:informal}
% !TEX root = main.tex

In this section we give an informal overview of Hennessy \& Regan's TPL (Temporal Process Language)~\cite{tpl} and introduce a few basic concepts of causal-consistent reversibility~\cite{rccs,LanesePU20}. For a more rigorous introduction, the interested reader can find the syntax and semantics of TPL in Appendix~\ref{app:tpl} and the syntax and semantics of the reversible calculus CCSK~\cite{ccsk} in Appendix~\ref{app:ccsk}. The syntax and semantics of CCS, which is at the basis of both TPL and CCSK, is in Appendix~\ref{app:ccs}.

\subsection{Overview of TPL}\label{sec:tplov} Process $\timed{pid. P}{Q}$ models a timeout: it can either immediately do action $pid$ followed by $P$ or, in case of delay, continue as $Q$. In transition (\ref{exa1}) the timeout process is in parallel with co-party $\overline{pid}.0$ that can immediately synchronise with action $pid$, and hence the timeout process continues as $P$.  
\begin{equation}\label{exa1}
\overline{pid}.0 \parallel  \timed{pid. P}{Q} \fwd{\tau} 0 \parallel P
 \end{equation}
In transition (\ref{exa2}), $\timed{pid. P}{Q}$ is in parallel with process $\sigma. \overline{pid}.0$ that can synchronise only after a delay of one time unit $\sigma$ ($\sigma$ is called a time action). Because of the delay, the timeout process continues as $Q$:
\begin{equation}\label{exa2}
 \sigma. \overline{pid}.0 \parallel \timed{pid. P}{Q}\fwd{\sigma} \overline{pid}.0 \parallel Q 
 \end{equation}
The processes on the left-hand side of transition (\ref{exa2}) describe the interaction structures of the Erlang program in Example~\ref{ex:1}. More precisely, the timeout of $200$ time units in process \lstinline{A} can be encoded using nested timeouts: 
\[A(0) = Q \quad A(n+1) = \timed{pid. P}{A(n)} \quad (n\in \naturals)\]
while process \lstinline{B}  can be modelled as the sequential composition of $500$ actions $\sigma$ followed by action $\overline{pid}$, as follows:
\[B(0) = \overline{pid} \quad B(n+1) = \sigma.{B(n)} \quad (n\in \naturals)\] 
Using the definition above, $\timed{pid. P}{A(200)}$ models a process that executes $pid$ and continues as $P$ if a co-party is able to synchronise within $200$ time units, otherwise executes $Q$.
Hence, Example~\ref{ex:1} is rendered as follows:
$$\timed{pid. P}{A(200)} \parallel B(500)$$
The design of TPL is based on (and enjoys) three properties~\cite{tpl}: time-determi\-nism, patience, and maximal progress. 
\emph{Time-determinism} means that time actions from one state can never reach distinct states, formally: if $P\fwd{\sigma}Q$ and  $P\fwd{\sigma}Q'$ then $Q=Q'$. A consequence of time-determinism is that choices can only be decided via communication actions and not by time actions, for example $\alpha.P + \beta.Q $ can change state by action $\alpha$ or $\beta$, but not by time action $\sigma$. Process $\alpha.P + \beta.Q$ can make an action $\sigma$, by a property called patience, but this action would not change the state, as shown in transition (\ref{exa3}).  
\begin{equation}\label{exa3}
\alpha.P + \beta.Q \fwd{\sigma}\alpha.P + \beta.Q
\end{equation}

\emph{Patience} ensures that communication processes like $\alpha.P$ can indefinitely delay communication $\alpha$ with $\sigma$ actions (without changing state) until a co-party is available. For example, by patience, process $\alpha.P$ in (\ref{exa31}) can delay the communication on $\alpha$ until the other process $\sigma.\overline{\alpha}.Q$ is ready to communicate:
\begin{equation}\label{exa31}
\alpha.P \parallel \sigma.\overline{\alpha}.Q \fwd{\sigma}\alpha.P \parallel \overline{\alpha}.Q\fwd{\tau}P\parallel Q
\end{equation}
\emph{Maximal progress} states that (internal/synchronisation) $\tau$ actions cannot be delayed, formally: if $P\fwd{\tau}Q$ then there is no $Q'$ such that $P\fwd{\sigma}Q'$. Namely, a delay can only be attained either via explicit $\sigma$ prefixes or because synchronisation is not possible. 
Basically,  patience allows for time actions when communication is not possible, and maximal progress disallows time actions when communication is possible:
\[\begin{array}{lll}
\alpha.P\fwd{\sigma} &  \text{ (by patience)}\\
\alpha.P \parallel \overline{\alpha}.Q\not\fwd{\sigma} \text{ because } \alpha.P \parallel \overline{\alpha}.Q\fwd{\tau} & \text{ (by maximal progress)} 
\end{array}
\]

\subsection{Overview of causal-consistent reversibility.}
Before presenting $\rtpl$, we discuss the \emph{reversing technique} we adopt.
In the literature, two approaches to define a causal-consistent extension of a given calculus or language have been proposed: \emph{dynamic} and \emph{static}~\cite{LaneseMM21}. 
The dynamic approach (as in~\cite{rccs,rpi,rhoTCS}) makes explicit use of memories to keep track of past events and causality relations, while 
the static approach (originally proposed in~\cite{ccsk}) is based on two ideas: making all the operators of the language static so that no information is lost and using communication keys to keep track of which events have been executed. In the dynamic approach, constructors of processes disappear upon transitions (as in standard calculi). 

For example, in the following CCS transition:
$$a.P \xrightarrow{a} P$$ the action $a$ disappears as an effect of the transition. The dynamic approach prescribes to use memories to keep track of the discarded items. In static approaches, such as~\cite{ccsk}, actions are syntactically maintained, and process $a.P$ can perform the transition below
$$a.P \xrightarrow{a[i]} \magenta{a[i].}P$$
where $P$ is decorated with the executed action $a$ and a unique key $i$. 
The term $\magenta{a[i].}P$ acts like $P$ in forward transitions, while the coloured part decorating $P$ is used to define backward transitions, e.g., 
$$\magenta{a[i].}P \bk{a[i]} a.P$$
Keys are important to correctly revert synchronisations. Consider the process below. It can take two forward synchronisations with keys $i$ and $j$, respectively: 
$$a.P_1 \parallel \overline{a}.P_2 \parallel a.Q_1 \parallel \overline{a}.Q_2 \xrightarrow{\tau[i]} \xrightarrow{\tau[j]} \magenta{a[i].}P_1 \parallel \magenta{\overline{a}[i].}P_2 \parallel \magenta{a[j].}Q_1 \parallel \magenta{\overline{a}[j].}Q_2$$ 
From the reached state, there are two possible backward actions: $\tau[i]$ and $\tau[j]$. The keys are used to ensure that a backward action, say $\tau[i]$, only involves parallel components that have previously synchronised and not, for instance, $\magenta{a[i].}P_1$ and $\magenta{\overline{a}[j].}Q_2$.
When looking at the choice operator, in the following CCS transition:
$$a.P + b.Q \xrightarrow{a} P$$
both the choice operator ``+'' and the discarded branch $b.Q$ disappear as an effect of the transition. 
%The dynamic approach prescribes to use a memory to keep track of the discarded items. 
In static approaches, the choice operator and the discarded branch are syntactically maintained, and process $a.P + b.Q$ can perform the transition below:
$$a.P + b.Q \xrightarrow{a[i]} \magenta{a[i].}P\magenta{+b.Q}$$
where $\magenta{a[i].}P\magenta{+b.Q}$ acts like $P$ in forward transitions, while the coloured part allows one to undo $a[i]$ and then possibly proceed forward with an action $b[j]$. 

In this paper, we adopt the static approach since it is simpler, while the dynamic approach is more suitable to complex languages such as the $\pi$-calculus, see the discussion in~\cite{LaneseMM21,LaneseP21}.

%\newpage
\section{The Reversible Temporal Process Language}\label{sec:tpl}
% !TEX root = main.tex

In this section we define $\rtpl$, an extension of Hennessy \& Regan TPL (Temporal Process Language)~\cite{tpl} with causal-consistent reversibility following the static approach in the style of~\cite{ccsk}. 
\subsection{Syntax of $\rtpl$.} We denote with $\procs$ the set of all the configurations generated by the grammar in Figure~\ref{fig:syn}. 
\begin{figure}[t]
\begin{align*}
\textrm{(Processes)}\quad & P = \,   \pi.P \sep \timed{P}{Q} \sep  P + Q \sep P \parallel Q \sep \res{P}{a} \sep A \sep \nil \\[0.2cm]
\textrm{(Configurations)}\quad  &  X = \rho\colorkey{i}.X \sep\timedr{X}{Y}{i} \sep  \timedl{X}{Y}{i} \mid X + Y  \sep \\ 
&\,\qquad X \parallel Y \sep X\setminus a \sep P\\
\textrm{(Communication actions)}\quad 
& \alpha =\, a \sep \co{a} \sep \tau\\
\textrm{(Prefixes)}\quad 
& \pi = \, \alpha \sep \sigma\\
\textrm{(Runtime prefixes)}\quad 
& \rho =\, \pi \sep \sigma_{\bot} 
\end{align*}
\caption{Syntax of $\rtpl$}
\label{fig:syn}
\end{figure}

\textit{Processes} ($P, Q, \ldots$) describe timed interactions following~\cite{tpl}. We let $\names$ be the set of action names $a$, $\co{\names}$ the set of action conames $\co{a}$. We use $\alpha$ to range over $a$, $\co{a}$ and internal actions $\tau$. We assume $\co{\co{a}}=a$. In process $\pi.P$, prefix $\pi$ can be a communication action $\alpha$  or a time action $\sigma$, and $P$ is the continuation. Timeout $\timed{P}{Q}$ executes either $P$ (if possible) or $Q$ (in case of timeout). 
$P+Q$, $P\parallel Q$, $\res{P}{a}$, $A$, and $\nil$ are the usual choice, parallel composition, 
name restriction, recursive call, and terminated process from CCS.  For each recursive call $A$ we assume a recursive definition $A \stackrel{def}{=}P$. We also assume recursion to be guarded, hence recursive variables can only occur under prefix.

\textit{Configurations} ($X, Y, \ldots$) describe states via annotation of executed actions with keys following the static approach. We let $\keys$ be the set of all keys ($k, i, j, \ldots$).  Configurations are processes with (possibly) some computational history (i.e., prefixes marked with keys): 
$\pi\colorkey{i}.X$ is the configuration that has already executed $\pi$, and the execution of such $\pi$ is identified by key $i$. Configuration $\timedl{X}{Y}{i}$ is executing the main branch $X$ whereas $\timedr{X}{Y}{i}$ is executing $Y$. 
Some TPL processes, namely patient processes like $\alpha.P$ illustrated earlier in (\ref{exa31}), allow time to pass without changing their own structure. This is an issue in $\rtpl$, since it may lead different parallel components to have a different understanding of the passage of time, while we want time to pass at the same pace for each parallel component. For this reason, to record that time has passed for a patient process, we use a special prefix $\ghost{i}.X$. Namely, $\ghost{i}.X$ is the configuration which has patiently registered the passage of time along with key $i$. Prefix $\ghost{i}$ differs from $\sigma\colorkey{i}$ since the former is only for the current execution (patient delays \emph{may} happen but do not always have to), while the latter requires time to pass in each possible execution (see Example~\ref{ex:ghostright}). We will discuss this issue in more detail in \cref{sec:causality}. We use $\rho$ to denote either $\pi$ or $\ghostlab$. 

A configuration can be thought of as a context with actions that have already been executed, each associated to a key, containing a process $P$, with actions yet to execute and hence with no keys. Notably, keys are distinct but for actions happening together: an action and a co-action that synchronise, or the same time action traced by different processes, e.g., by two parallel delays. 
A configuration $P$ can be thought of as the initial state of a computation, where no action has been executed yet. We call such configurations \emph{standard}. Definition~\ref{def:key} formalises this notion via function $\key(X)$ that returns the set of keys of a given configuration. 

%KEYS DEFINITION
\begin{defi}[Standard configuration]\label{def:key}
The set of keys of a configuration $X$, written $\key(X)$, is inductively defined as follows:
\[\begin{array}{llllll}
& \key(P) = \emptyset &&\quad \key(\rho\colorkey{i}.X) = \{i\} \cup \key(X) &&\quad \key(\res{X}{a}) = \key(X) \\
&\multicolumn{5}{l}{\key(\timedr{Y}{X}{i}) = \key(\timedl{X}{Y}{i}) = \{i\} \cup \key(X)}\\
& \multicolumn{5}{l}{\key(X + Y) = \key(X\parallel Y) = \key(X) \cup \key(Y)}\\
\end{array}\]
A configuration $X$ is \emph{standard}, written $\std(X)$, if $\key(X)=\emptyset$. 
\end{defi}
Basically, a standard configuration is a process. 
To handle the delicate interplay between time-determinism and reversibility of time actions, it is useful to distinguish the class of configurations that 
have not executed any \emph{communication} action (but may have executed time actions). We call these configurations \emph{not-acted} and characterise them formally using the predicate $\nact(\cdot)$ below.

\begin{defi}[Not-acted configuration]\label{def:nact} The not-acted predicate $\nact(\cdot)$ is inductively defined as:
\[
	\begin{array}{ll}
	&\nact(\nil)	= \nact(A) 	= \nact (\timed{X}{Y})  = \nact(\pi.X) = \true \\
	& \nact(\alpha\colorkey{i}.X) = \nact(\timedl{X}{Y}{i}) = \false   \\
	& \nact(\sigma\colorkey{i}.X) =\nact(\ghost{i}.X) =  \nact(\res{X}{a}) = \nact(\timedr{Y}{X}{i})   = \nact(X) \\
	& \nact(X\parallel Y) =  \nact(X + Y) = \nact(X) \wedge \nact(Y)
\end{array}
\]
A configuration $X$ is \emph{not-acted} (resp. \emph{acted}) if $\nact(X)=\true$ (resp. $\nact(X)=\false$). 
\end{defi}
Basic standard configurations are always not-acted (first line of Definition~\ref{def:nact}). Indeed, it is not possible to reach a configuration $\pi.X$ where $X$ is acted. 
In the second line, a configuration that has executed communication actions is acted. In particular, we will see that $\timedl{X}{Y}{i}$ is only reachable via a communication action.  
The configurations in the third line are not-acted if their continuations are not-acted. 
For parallel composition and choice, $\nact(\cdot)$ is defined as a conjunction. For example $\nact(\alpha\colorkey{i}.P \parallel \beta.Q)=\false$ and $\nact(\alpha\colorkey{i}.P + \beta.Q)=\false$. Note that in a choice configuration $X+Y$, at most one between $X$ and $Y$ can be acted.
Whereas $\std(X)$ implies $\nact(X)$, the opposite implication does not hold. For example, $\std(\sigma\colorkey{i}.\nil)=\false$ but $\nact(\sigma\colorkey{i}.\nil)=\true$.

\subsection{Semantics of $\rtpl$.}

We denote with $\act$ the set $\names \cup \co{\names} \cup \{\tau,\sigma\}$ of actions and let $\pi$ to range over the set $\act$. We define the set of all the labels $\lbl= \act \times  \keys $.
 The labels associate each $\pi\in \act$ to a key $i$. 
 The key is used to associate the forward occurrence of an action with its corresponding reversal. Also, instances of actions occurring together (synchronising action and co-action or the effect of time passing in different components of a process) have the same key, otherwise keys are distinct.
 \begin{figure}
\input{forwardLTS}
\caption{$\rtpl$ forward LTS}
\label{fig:fw}
\end{figure}

\begin{defi}[Semantics]
The operational semantics of $\rtpl$ is given by two Labelled Transition Systems (LTSs) defined on  the same set of all configurations 
$\procs$, and the set of all labels $\lbl$: 
a forward LTS ($\procs$, $\lbl$, $\fwd{}$) and 
a backward LTS ($\procs$, $\lbl$, $\bk{}$).  We define $\red{} = \fwd{} \cup \bk{}$,
where $\fwd{}$ and $\bk{}$ are the least transition relations induced by the rules in Figure~\ref{fig:fw} and Figure~\ref{fig:bk}, respectively.

\end{defi}
Given a relation $\mathcal{R}$, we indicate with 
 $\mathcal{R}^{*}$ its reflexive and transitive closure. 
We use notation $X\not\fwd{\tau}$ (resp.  $X\not\bk{\tau}$) when there are no configuration $X'$ and key $i$ such that $X \fwd{\tau[i]}X'$ (resp. $X\bk{\tau[i]}X'$). 

We now discuss the rules of the forward semantics (Figure~\ref{fig:fw}).
Rule [\textsc{PAct}] describes patient actions: in TPL process $\alpha.P$ can make a time step to itself. This kind of actions allows a process to wait indefinitely until it can communicate (by patience~\cite{tpl}). However, in $\rtpl$ we need to track passage of time, hence rule [\textsc{PAct}] adds a $\ghost{i}$ prefix in front of the configuration, with a key $\ckey{i}$.
Rule [\textsc{RAct}] executes actions $\alpha\colorkey{i}$ or $\sigma\colorkey{i}$ on a prefix process. Observe that, unlike patient time actions on $\alpha.P$, a time action on $\sigma.P$ corresponds to a deliberate and planned time consuming action and, therefore, it executes the $\sigma$ prefix, hence no $\ghostlab$ prefix needs to be added. 
Rule [\textsc{Idle}] registers passage of time on a $\nil$ configuration by adding a $\ghost{i}$ prefix to it.
Rule [\textsc{Act}] lifts actions of the continuation $X$ on configurations where prefix $\rho\colorkey{i}$ has already been executed. Side condition $\ckey{j} \neq \ckey{i}$ ensures freshness of $\ckey{j}$ is preserved.  
Rules [\textsc{STout}] and [\textsc{SWait}] model timeouts. In rule [\textsc{STout}], if $X$ is not able to make $\tau$ actions then $Y$ is executed; this rule models a timeout that triggers only if the main configuration $X$ is stuck. The negative premise on [\textsc{Stout}] can be encoded into a decidable positive one as shown in  Appendix~\ref{sec:negative}. In rule [\textsc{Tout}] instead the main configuration can execute and the timeout does not trigger. Rule [\textsc{SWait}] (resp. [\textsc{Wait}]) models transitions inside a timeout configuration where the $Y$ (resp.~$X$) branch has been previously taken. The semantics of timeout construct becomes clearer in the larger context of parallel configurations, when looking at rule [\textsc{SynW}]. 
Rule [\textsc{SynW}] models time passing for parallel configurations. The negative premise 
ensures that, in case $X$ or $Y$ is a timeout configuration, timeout can trigger only if no synchronisation may occur, that is if the configurations are stuck. [\textsc{SynW}] requires time to pass in the same way (an action $\sigma$ is taken by both components, with the same key $\ckey{i}$) for the whole system.
Rules [\textsc{Par}] (and symmetric) and [\textsc{Syn}] are as usual for communication actions and allow parallel configurations to either proceed independently or to synchronise. In the latter case, the keys need to coincide.
Defining the semantics of choice configuration $X+Y$ requires special care to ensure time-determinism (recall, choices are only decided via communication actions). Also, we need to record time actions  to be able to reverse them correctly (cfr.~Loop Lemma, discussed later on in Lemma~\ref{lem:loop}).
Rule [\textsc{ChoW}] describes the passage of time along a choice configuration $X+Y$.
Since time does not decide a choice, both branches have to execute the same time action with the same key. 
Rule [\textsc{Cho}] allows one to take one branch, or continue executing a previously taken branch. The choice construct is syntactically preserved, to allow for reversibility, but the one branch that is not taken remains non-acted (i.e., $\nact(Y)$). This ensures that choices can be decided by a communication action only.
Let us note that even in case of a decided choice, that is a choice configuration in which one of the two branches has performed a communication action, time actions are registered by both configurations. For example, the configuration $a.\nil+\sigma.\nil$  can execute the following transitions:
\[a.\nil+\sigma.\nil\fwd{a\colorkey{i}} a\colorkey{i}.\nil +\sigma.\nil \fwd{\sigma\colorkey{j}}
a\colorkey{i}.\ghost{j}.\nil +\sigma\colorkey{j}.\nil
\]
After the $a\colorkey{i}$ action, even if the left branch of the choice has been selected, both branches participate to the time action $\sigma\colorkey{j}$.
Rules [\textsc{Hide}] and [\textsc{Const}] are standard. 

The rules of the backward semantics, in Figure~\ref{fig:bk}, undo communication and time actions executed under the forward semantics.
 Backward rules are symmetric to the forward ones.

Now that we have introduced both the forward and the backward rules we can clarify the difference between $\ghostlab$ and $\sigma$.

\begin{example}\label{ex:ghostright}
Let us consider the patient process $a.P$. We can have the following derivation:
\begin{equation}
a.P \fwd{\sigma\colorkey{i}} \ghost{i}.a.P \bk{\sigma\colorkey{i}} a.P \fwd{a\colorkey{j}} a\colorkey{j}.P
\end{equation}
where $a.P$ executes forwards in two different ways: first by letting time pass, later on by interacting on $a$. Notice that for these interactions to be possible in a larger context we need the context to have changed as well.

We can try to have a similar derivation using process $\sigma.a.P$ instead, but the final outcome is not the same:
\begin{equation}\label{ex:ghost}
\sigma.a.P \fwd{\sigma\colorkey{i}} \sigma\colorkey{i}.a.P \bk{\sigma\colorkey{i}} \sigma.a.P
\end{equation}
Indeed, at this stage $\sigma.a.P$ cannot interact on $a$. In general,
$\sigma$ requires time to pass in every possible computation, while
$\ghostlab$ does not.\finex
\end{example}

\begin{figure}[!t]
\input{backwardLTS}
\caption{$\rtpl$ backward LTS}
\label{fig:bk}
\end{figure}

\begin{defi}[Reachable configurations]
A configuration $\conf{T}{X}$ is reachable 
if there exist a process $\conf{0}{P}$ and a derivation $\conf{0}{P} \red{}^* \conf{T}{X}$.
\end{defi}
Basically, a configuration is
reachable if it can be obtained via forward and backward actions from a standard configuration.

\section{Relations with TPL and reversible CCS}\label{sec:properties}
% !TEX root = main.tex

\label{sec:corr}

We can consider $\rtpl$ as a reversible extension of TPL, but also as an extension of reversible CCS (in particular CCSK~\cite{ccsk}) with time.
First, if we consider the forward semantics only, then we have a tight correspondence with TPL. To show this we define a forgetful map which discards the history information of a configuration.

\begin{defi}[History forgetting map]
The \emph{history forgetting map} $\forgeth: \mathcal{X}\rightarrow \mathcal{P}$ is inductively defined as follows: 
\[\begin{array}{ll}
\forgeth(P) = P & \forgeth(\rho\colorkey{i}.X) = \forgeth(X)\\[2mm]
\forgeth(\timedl{X}{Y}{i})= \forgeth(X) & \forgeth(\timedr{X}{Y}{i}) = \forgeth(Y)\\[2mm]
\forgeth(X \parallel Y) = \forgeth(X) \parallel \forgeth(Y) \hspace{1.5cm}& \forgeth(\res{X}{a}) = \res{\forgeth(X)}{a}\\[2mm]
\multicolumn{2}{l}{\forgeth(X + Y) = \left\{ 
					\begin{array}{ll}\forgeth(X) &\text{ if } \neg\nact(X)  \wedge \nact(Y)   \\
					\forgeth(Y) &\text{ if } \neg\nact(Y)  \wedge \nact(X) \\
					 \forgeth(X) + \forgeth(Y) & \text{ otherwise}
		\end{array}\right.}
\end{array}\]
\end{defi}
The definition above deletes all the
information about history from a configuration $X$, hence it is the identity on standard configuration $P$. Even more, each configuration is mapped into a standard one. Notice that in a non-standard timeout, only the chosen branch is taken.
In TPL
 time cannot decide choices. This is reflected into the
definition of $\forgeth(X + Y)$,  where a branch disappears only if the other one did at least a communication action.

Notably, the restriction of $\forgeth$ to untimed configurations (namely configurations containing neither timeouts nor $\sigma$ prefixes nor $\ghostlab$ prefixes) is a map from CCSK~\cite{ccsk} to CCS.
Following the notation of Appendix \ref{sec:bg}, 
we will indicate with $\rightarrow_{\mathtt{t}}$ the semantics of TPL~\cite{tpl}, reported in
Appendix \ref{app:tpl},  and with
$\mapsto_{\mathtt{k}}$ the semantics of CCSK~\cite{ccsk},  reported in
Appendix \ref{app:ccsk}.

\begin{prop}[Embedding of TPL]
\label{prop:totpl}
Let $X$ be a reachable $\rtpl$ configuration:
\begin{enumerate}
	\item if $X\fwd{\pi\colorkey{i}} Y$ then $\forgeth(X) \xrightarrow{\pi}_{\mathtt{t}} \forgeth(Y)$;
	\item if $\forgeth(X) \xrightarrow{\pi}_{\mathtt{t}} Q$ then for any $i\in \keys \setminus \key(X)$ there is $Y$ such that $X\fwd{\pi\colorkey{i}} Y$ with $\forgeth(Y)=Q$.

\end{enumerate}
\end{prop}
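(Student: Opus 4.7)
The plan is to prove both items by induction on derivation trees: item (1) by induction on the derivation of $X \fwd{\pi\colorkey{i}} Y$ in $\rtpl$, and item (2) by induction on the derivation of $\forgeth(X) \xrightarrow{\pi}_\mathtt{t} Q$ in TPL. The heart of the argument is that each forward $\rtpl$ rule corresponds canonically to a TPL rule once the history annotations (the $\rho\colorkey{i}$ prefixes, the $\ghost{i}$ prefix, and the $\circ/\bullet$ timeout tags) are erased by $\forgeth$. For instance, [\textsc{RAct}] maps to the TPL prefix axiom via $\forgeth(\pi.P)=\pi.P$ and $\forgeth(\pi\colorkey{i}.P)=P$; [\textsc{PAct}] maps to TPL patience since $\forgeth(\ghost{i}.\alpha.P)=\alpha.P$; [\textsc{Tout}]/[\textsc{STout}] and [\textsc{Wait}]/[\textsc{SWait}] map to the TPL timeout rules; while [\textsc{Act}], [\textsc{Par}], [\textsc{Syn}], [\textsc{Hide}], [\textsc{Const}] are direct.

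Before starting the inductions, I would isolate three invariants of reachable configurations. First, at most one immediate branch of a choice $X+Y$ is acted, so the three-way case analysis in the definition of $\forgeth(X+Y)$ is well-posed. Second, acted-ness is \emph{created} by communication actions and \emph{preserved} by time actions: if $X \fwd{\alpha\colorkey{i}} X'$ then $\neg\nact(X')$, whereas $X \fwd{\sigma\colorkey{i}} X'$ preserves $\nact(X)$. Third, and most importantly, the $\tau$-correspondence: $X \fwd{\tau\colorkey{j}} Z$ for some $j,Z$ iff $\forgeth(X) \xrightarrow{\tau}_\mathtt{t} Q$ for some $Q$. This last fact is needed to translate the negative premises $X \not\fwd{\tau}$ in [\textsc{STout}] and $(X\parallel Y)\not\fwd{\tau}$ in [\textsc{SynW}] to and from their TPL counterparts.

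The only delicate inductive cases are [\textsc{Cho}] and [\textsc{ChoW}]. For [\textsc{Cho}] with $X\fwd{\alpha\colorkey{i}} X'$ and $\nact(Y)$, the second invariant gives $\neg\nact(X')$, hence $\forgeth(X'+Y)=\forgeth(X')$; on the source side, $\forgeth(X+Y)$ is either $\forgeth(X)+\forgeth(Y)$ (if $\nact(X)$) or $\forgeth(X)$ (if $X$ was already acted), and in both sub-cases TPL reaches $\forgeth(X')$ by the inductive hypothesis, combined with the TPL choice rule in the former. For [\textsc{ChoW}], acted-ness preservation keeps the source and target clauses of $\forgeth$ synchronised, and the TPL time-on-choice rule supplies the step. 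The reverse direction (item 2) then reverses this analysis: given a TPL rule instance, one picks the matching $\rtpl$ rule and uses the freshness assumption $i\notin\key(X)$ to discharge the freshness side conditions; where TPL has collapsed a choice, patience ([\textsc{PAct}], [\textsc{Idle}]) supplies the $\sigma\colorkey{i}$ step needed on the silent branch to apply [\textsc{ChoW}]. I expect the main obstacle to be invariant (c): establishing the $\tau$-correspondence requires a careful structural induction on $X$ that tracks how $\forgeth$ can collapse a choice, checking in particular that no $\tau$ is created or destroyed by the collapse — which is where the reachability hypothesis and the acted-ness invariants are essential.
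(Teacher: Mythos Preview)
Your plan for item (1) matches the paper's: induction on the $\rtpl$ derivation with a case analysis on the last rule, and your handling of [\textsc{Cho}]/[\textsc{ChoW}] via the acted-ness invariants is in fact more explicit than what the paper writes out.

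For item (2) the paper takes a different route: it does structural induction on $X$ (equivalently, on the definition of $\forgeth(X)$), not on the TPL derivation. This matters because $\forgeth$ is not injective: $\forgeth(\rho\colorkey{i}.X_1)=\forgeth(X_1)$, $\forgeth(\timedl{X_1}{Y}{i})=\forgeth(X_1)$, and $\forgeth(X_1+X_2)=\forgeth(X_1)$ when $X_1$ is acted. A single TPL derivation from $\forgeth(X)$ therefore corresponds to unboundedly many $\rtpl$ configurations $X$, and induction on the TPL derivation alone gives you no decreasing measure for peeling off the history layers (for $X=\rho\colorkey{i}.X_1$ the TPL derivation is literally unchanged). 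You would need a secondary induction on the structure of $X$ anyway; the paper simply takes that as the primary one, which is cleaner.

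Your invariant (c) is the right diagnosis of where the work lies, but treating it as a lemma proved \emph{before} items (1) and (2) is circular. The $\tau$-correspondence is just (1)+(2) specialised to $\pi=\tau$, and a $\tau$ of $X_1\parallel X_2$ can arise via [\textsc{Syn}] from $\alpha$ and $\overline{\alpha}$ transitions of $X_1$ and $X_2$, so even the $\tau$-only statement already requires the full action correspondence on the subterms. The clean fix is to prove (1) and (2) \emph{simultaneously} by structural induction on $X$: then in the [\textsc{SynW}] and [\textsc{STout}] cases (and their TPL counterparts) the negative $\tau$-premise transfers across $\forgeth$ by applying both directions of the induction hypothesis to the strictly smaller components. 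The paper's ``by inductive hypothesis'' in the [\textsc{STout}] case is really this simultaneous IH, though it is not spelled out; your instinct that (c) is the crux is correct, but it should be packaged as part of the induction hypothesis rather than as a standalone preliminary lemma.
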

\begin{proof}
\begin{description}
\item[(1)] by induction on the derivation $X\fwd{\pi\colorkey{i}} Y$, with a case analysis on the last applied rule. We detail a few sample rules.

 If the move is by rule [\textsc{PAct}] then we have $Y=\ghost{i}.\alpha.X_1$, 
  with $\forgeth(X) = \forgeth(Y)$,
  and in TPL we have a corresponding state-preserving move with label $\sigma$ 
  derived using rule \textsc{Act}$_2$ in Figure~\ref{fig:tplsem2}.
 
  In the case of rule [\textsc{Act}], 
  $X=\rho\colorkey{i}.Z$ and $Y= \rho\colorkey{i}.Z'$ with  $Z\fwd{\pi\colorkey{j}}Z'$. By inductive hypothesis, in TPL $\forgeth(Z) \xrightarrow{\pi}_{\mathtt{t}} \forgeth(Z')$. Since $\forgeth(X)=\forgeth(Z)$ and $\forgeth(Y) = \forgeth(Z')$ we are done. The cases for [\textsc{Const}] and [\textsc{Hide}] are similar by induction. The cases for [\textsc{SynW}] and [\textsc{ChoW}] follow by induction as well.  

If the last applied rule is
[\textsc{Cho}], then we have that $X= X_1 + X_2$ with $X_1\fwd{\alpha\colorkey{i}} X'_1$ and $\nact(X_2)$. Also, $Y = X'_1 + X_2$ with $X'_1$ acted. Hence, $\forgeth(Y)=\forgeth(X'_1+X_2)=\forgeth(X'_1)$. We consider the case $\nact(X_1)$, the other one is simpler. By definition, we have that $\forgeth(X_1+X_2) = \forgeth(X_1) +\forgeth(X_2).$ Since $X_1\fwd{\alpha\colorkey{i}} X'_1$, by applying the inductive hypothesis we also have that
$\forgeth(X_1)\fwd{\alpha}_{\mathtt{t}} \forgeth(X'_1)$. Also, since the label is not a $\sigma$ action, in TPL 
we can use 
rule \textsc{Sum}$_1$ in Figure \ref{fig:tplsem2}   
that from $\forgeth(X_1)\fwd{\alpha}_{\mathtt{t}} \forgeth(X'_1)$ allows one to derive $\forgeth(X_1) + \forgeth(X_2)\fwd{\alpha} \forgeth(X'_1)$, as desired.

If the last applied rule is [\textsc{STout}] then $\timed{X_1}{X_2} \fwd{\sigma\colorkey{i}}\timedr{X_1}{X_2}{i}$ with $X_1$ and $X_2$ standard, and $X_1$ that can not perform $\tau$ steps. By inductive hypothesis $\forgeth(X_1)$ can not perform $\tau$ steps in TPL, hence in TPL
we can use
rule \textsc{THEN}$_2$ in Figure \ref{fig:tplsem2}  to derive
$\forgeth(\timed{X_1}{X_2}) = \timed{X_1}{X_2} \fwd{\sigma}_{\mathtt{t}} X_2 =\forgeth(\timedr{X_1}{X_2}{i})$ as desired.

\item[(2)] by induction on the definition of $\forgeth(X)$  (structural induction on $X$).
Let us first assume $X$ standard, hence $\forgeth(X) = X$.
Let us consider $X = \alpha.P$. 
In TPL, $\alpha.P$ can make a state preserving transition $\sigma$ and the corresponding $\rtpl$ configuration can match it: $X\fwd{\sigma\colorkey{i}}\ghost{i}.X$, with $\forgeth(\ghost{i}.X)=X$. Alternatively, in TPL, $\alpha.P \fwd{\alpha}_{\mathtt{t}} P$. The thesis follows since $\alpha.P \fwd{\alpha\colorkey{i}}\alpha\colorkey{i}.P$ with $\forgeth(\alpha\colorkey{i}.P)= P$. The other cases are similar, but using the induction hypothesis. 

Let us now assume $X$ non standard.
The most interesting case is when $X = X_1 +X_2$.
Let us consider $X_1$ acted (the case where $X_2$ is acted is symmetric).
In this case $\forgeth(X_1+X_2) = \forgeth(X_1)$ hence the thesis follows by inductive hypothesis using rule [\textsc{ChoW}] for $\sigma$ actions and [\textsc{Cho}] for communication actions.
If both $X_1$ and $X_2$ are not acted, then $\forgeth(X_1+X_2) = \forgeth(X_1) + \forgeth(X_2)$. We now have two cases, either $\pi = \sigma$ or $\pi=\alpha$.
If $\pi = \sigma$ we have that by 
rule \textsc{SUM}$_3$ in Figure \ref{fig:tplsem2} 
$\forgeth(X_1) \fwd{\sigma}_{\mathtt{t}} Z_1$ and  $\forgeth(X_2) \fwd{\sigma}_{\mathtt{t}} Z_2$ allow one to derive $\forgeth(X_1)+\forgeth(X_2) \fwd{\sigma}_{\mathtt{t}} Z_1+Z_2$. By inductive hypotheses we have that there exist $X'_1$ and $X'_2$ such that $X_1\fwd{\sigma\colorkey{i}}X'_1$ and 
 $X_2\fwd{\sigma\colorkey{i}}X'_2$ with $\forgeth(X'_1) = Z_1$ and  $\forgeth(X'_2) = Z_2$. We can then apply
 rule [\textsc{ChoW}] to derive $X_1+X_2\fwd{\sigma\colorkey{i}} X'_1 + X_2'$. Since $X'_1$ and $X'_2$ are still not-acted we can conclude by noticing that
 $\forgeth(X'_1 + X'_2) = \forgeth(X'_1) + \forgeth(X'_2) = Z_1 + Z_2$. 
 The case for $\pi=\alpha$ is similar. \qedhere
 \end{description}
\end{proof}

We can describe the above correspondence between $\rtpl$ and TPL in a more abstract way by adapting the notion of (strong) timed  bisimulation~\cite{timeabstracted-bisim} to relate configurations from two calculi.
\begin{defi}[Timed bisimulation]\label{def:bisim}
A binary relation $\mathcal{R}$ on $\mathcal{X} \times \mathcal{P}$
is a strong timed bisimulation between $\rtpl$ and TPL if $(X,P)\in \mathcal{R}$ implies that
\begin{enumerate}
\item if $X\fwd{\pi\colorkey{i}} Y$, then there exists $Q$ such that   $P\xrightarrow{\pi}_{\mathtt{t}}Q$ and $(Y,Q)\in \mathcal{R}$;

\item if $P\xrightarrow{\pi}_{\mathtt{t}}Q$, then there exist $Y$ and $i$ such that $X\fwd{\pi\colorkey{i}} Y$ and $(Y,Q)\in \mathcal{R}$.

\end{enumerate}
The
largest strong timed bisimulation is called strong timed equivalence, denoted $\sim$.
\end{defi}

We can now relate $\rtpl$ and TPL as follows:
\begin{thm}\label{BehavForgetH}
For each reachable $\rtpl$ configuration $X$ we have that 
$X \sim \forgeth(X)$.
\end{thm}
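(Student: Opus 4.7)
The plan is to use \cref{prop:totpl} directly by exhibiting a witness bisimulation, rather than redoing the transition-by-transition analysis. Concretely, I would define
\[
\mathcal{R} \;=\; \{\,(X,\forgeth(X)) \mid X \in \procs \text{ is reachable}\,\}
\]
and show that $\mathcal{R}$ is a strong timed bisimulation between $\rtpl$ and TPL in the sense of \cref{def:bisim}. Since $(X,\forgeth(X))\in\mathcal{R}$ for every reachable $X$, this immediately gives $X \sim \forgeth(X)$.

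For the first clause of \cref{def:bisim}, I take $(X,P)\in\mathcal{R}$ with $P=\forgeth(X)$ and an $\rtpl$ transition $X\fwd{\pi\colorkey{i}}Y$. By \cref{prop:totpl}(1), $\forgeth(X)\xrightarrow{\pi}_{\mathtt{t}}\forgeth(Y)$, so I set $Q=\forgeth(Y)$. The residual pair $(Y,Q)=(Y,\forgeth(Y))$ lies in $\mathcal{R}$ because $X$ is reachable by assumption and $Y$ is obtained from $X$ by a single forward step, hence $Y$ is reachable. For the second clause, suppose $P=\forgeth(X)\xrightarrow{\pi}_{\mathtt{t}}Q$. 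Since $\keys$ is infinite and $\key(X)$ is finite, I can always pick a fresh key $i\in\keys\setminus\key(X)$; \cref{prop:totpl}(2) then supplies a $Y$ such that $X\fwd{\pi\colorkey{i}}Y$ and $\forgeth(Y)=Q$. Again $Y$ is reachable, so $(Y,Q)\in\mathcal{R}$, as required.

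There is no real obstacle: the bulk of the work was already done inside \cref{prop:totpl}, whose two clauses correspond almost verbatim to the two clauses of strong timed bisimulation. The only tiny points to be careful about are (i) explicitly noting that reachability is preserved by single forward steps so that the relation is closed under the residual operation, and (ii) the existence of a fresh key, which is immediate because the set $\keys$ is infinite while $\key(X)$ is finite for every configuration. After these observations, one concludes $\mathcal{R}\subseteq\sim$, and specialising to the pair $(X,\forgeth(X))$ yields the statement.
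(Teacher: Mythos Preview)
Your proposal is correct and follows essentially the same approach as the paper: both exhibit the relation $\mathcal{R}=\{(X,\forgeth(X))\}$ and discharge the two bisimulation clauses directly via the two items of \cref{prop:totpl}. Your version is slightly more careful in that you explicitly restrict $\mathcal{R}$ to reachable configurations (needed since \cref{prop:totpl} is stated only for reachable $X$), note that reachability is preserved under single forward steps, and justify the availability of a fresh key; the paper leaves these points implicit.
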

\begin{proof}
It is sufficient to show that the relation $\mathcal{R} = \{(X,P) \mid \forgeth(X) = P\}$ is a strong timed bisimulation.
Let us check the conditions. If $X\fwd{\pi\colorkey{i}}Y$ then thanks to
Proposition~\ref{prop:totpl} we have that
$\forgeth(X)\fwd{\pi}_{\mathtt{t}} \forgeth(Y)$ with $ \forgeth(Y) = Q$, and we have that
$(Y,Q)\in \mathcal{R}$.  If $P\fwd{\pi}_{\mathtt{t}} Q$, thanks to 
Proposition~\ref{prop:totpl} we have that $X\fwd{\pi\colorkey{i}}
Y$ with $\forgeth(Y) = Q$,
and we have that 
$(Y,Q)\in \mathcal{R}$, as desired.
\end{proof}

Also, TPL is a conservative extension of CCS. This is stated in~\cite{tpl}, albeit not formally proved. Hence, we can define a \emph{forgetful} map which discards all the temporal operators of a TPL term and get a CCS one. We can obtain a stronger result and relate $\rtpl$ with CCSK~\cite{ccsk}. That is, if we consider the untimed part of $\rtpl$ what we get is a reversible CCS which is exactly CCSK. To this end, we define a time forgetting map $\forgett$. We denote with $\mathcal{X}^k$ the set of untimed reversible configurations of $\rtpl$, which coincides with the set of all CCSK configurations (which is defined in Appendix~\ref{app:ccsk}). The set inclusion $\mathcal{X}^k \subset \mathcal{X}$ holds. 

\begin{defi}[Time forgetting map]
The \emph{time forgetting map} $\forgett: \mathcal{X}\rightarrow \mathcal{X}^k$ is inductively defined as follows: 
\[
\begin{array}{lll}
\forgett(\nil) = \nil   &&  \forgett(A) =  A
 \\
 \forgett(\alpha.P) = \alpha.\forgett(P) &&  \forgett(\alpha\colorkey{i}.X) = \alpha\colorkey{i}.\forgett(X) \\
 \forgett(X+Y) = \forgett(X) + \forgett(Y) &&\forgett(X \parallel Y) =\forgett(X) \parallel \forgett(Y) \\
   \forgett(\res{X}{a}) =  \res{\forgett(X)}{a}   && \forgett(\timed{X}{Y}{}) =\forgett(X) +  \forgett(Y)   \\
 \forgett(\sigma.P) = \forgett(P) && \forgett(\sigma\colorkey{i}.X) = \forgett(\ghost{i}.X) = \forgett(X)\\
  \forgett(\timedl{X}{Y}{i}) = \forgett(X) + \forgett(Y) && \forgett(\timedr{X}{Y}{i}) = \forgett(X) + \forgett(Y) 
\end{array}
\]
\end{defi}
Notably, the restriction of $\forgett$ to standard configurations is a map from TPL to CCS.

The most interesting aspect in the definition above is that the timeout operator $\timed{X}{Y}$ is rendered as a sum. This also happens for the decorated configurations $\timedl{X}{Y}{i}$ and $\timedr{X}{Y}{i}$. We will further discuss this design decision after Proposition~\ref{prop:rtpl_to_ccsk}.
Also, since we are relating a timed semantics with an untimed one (CCSK), the $\sigma$ actions performed by the timed semantics are not reflected in CCSK. 

\begin{prop}[Embedding of CCSK~\cite{ccsk}]
\label{prop:rtpl_to_ccsk}
Let $X$ be a reachable $\rtpl$ configuration. We have:
\begin{enumerate}
	\item if $X\fwd{\alpha\colorkey{i}} Y$ then $\forgett(X) \fwd{\alpha\colorkey{i}}_{\mathtt{k}} \forgett(Y)$;
	\item if $X\bk{\alpha\colorkey{i}} Y$ then $\forgett(X) \bk{\alpha\colorkey{i}}_{\mathtt{k}} \forgett(Y)$;        
	\item if $X\red{\sigma\colorkey{i}} Y$ then $\forgett(X) = \forgett(Y)$.
\end{enumerate}
\end{prop}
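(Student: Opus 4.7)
The natural plan is a single induction on the derivation of the given transition $X \red{\pi[i]} Y$ in $\rtpl$, with a case analysis on the last rule applied, treating the three items uniformly: items (1) and (2) follow by exhibiting the matching CCSK rule, while item (3) is discharged by showing syntactic equality $\forgett(X)=\forgett(Y)$. The three items are proved simultaneously because the forward and backward rule sets are symmetric, and because recursive calls into sub-derivations may mix communication steps and time steps (for example inside a prefix $\rho[i].X$).

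First I would dispose of item (3). Every rule whose label is a $\sigma$ action either (a) adds a $\ghostlab$ or $\sigma[i]$ prefix at the top (\textsc{PAct}, \textsc{RAct} with $\pi=\sigma$, \textsc{Idle}), which $\forgett$ deletes by the clauses $\forgett(\sigma[i].X)=\forgett(\ghost{i}.X)=\forgett(X)$, and symmetrically for the backward rules; or (b) triggers a timeout (\textsc{STout}), turning $\timed{X}{Y}$ into $\timedr{X}{Y}{i}$, which are both sent to $\forgett(X)+\forgett(Y)$ by $\forgett$; or (c) propagates a $\sigma$ step congruentially through parallel, sum, timeout-in-progress, restriction, prefix, or constant (\textsc{SynW}, \textsc{ChoW}, \textsc{SWait}, \textsc{Wait}, \textsc{Hide}, \textsc{Act}, \textsc{Const}), in which case the induction hypothesis $\forgett(X')=\forgett(Y')$ for each premise together with the structural clauses of $\forgett$ yields $\forgett(X)=\forgett(Y)$.

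For items (1) and (2), the communication rules match their CCSK counterparts almost one-for-one. Rule \textsc{RAct} for $\pi=\alpha$, \textsc{Act} with $\pi=\alpha$, \textsc{Par}, \textsc{Syn}, \textsc{Hide}, \textsc{Const}, and \textsc{Cho} translate directly under $\forgett$ to the corresponding CCSK rules, since $\forgett$ preserves the outer syntactic constructor in each of these cases and $\keyf(\forgett(X))\subseteq\keyf(X)$ so the freshness side conditions are maintained. The subtle cases are the timeout rules \textsc{Tout}, \textsc{Wait}, and \textsc{SWait}: here $\forgett$ collapses $\timed{X}{Y}$, $\timedl{X}{Y}{i}$, and $\timedr{X}{Y}{i}$ all to $\forgett(X)+\forgett(Y)$, so I would show that \textsc{Tout} (an $\alpha$ step out of $\timed{X}{Y}$ into $\timedl{X'}{Y}{i}$) is simulated in CCSK by the choice rule on $\forgett(X)+\forgett(Y)$ picking the left branch; \textsc{Wait} continues inside the already-chosen left branch, which maps to the CCSK choice-continuation on the left; and \textsc{SWait} continues inside the right branch after a timeout, which maps to the CCSK choice-continuation on the right. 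For this to go through, the $\nact$ premise in CCSK's choice rule must be checked on $\forgett(Y)$ (resp.~$\forgett(X)$); this follows because in the \textsc{Tout}/\textsc{Wait}/\textsc{SWait} rules $Y$ (resp.~$X$) is standard or not-acted, and $\nact$ is preserved under $\forgett$.

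The main obstacle I expect is precisely the timeout-to-sum translation: one has to verify that the \textsc{Cho} rule of CCSK really fires with the desired branch and the desired key at $\forgett(X)+\forgett(Y)$, and that this continues to work when the step is deep inside a context (handled by the congruential cases via the induction hypothesis). A secondary point that needs a line of care is the backward direction of \textsc{Tout}: a backward $\alpha[i]$ transition out of $\timedl{X}{Y}{i}$ lands in $\timed{X'}{Y}$, which $\forgett$ maps to $\forgett(X')+\forgett(Y)$, matching the CCSK backward choice rule. Once these timeout cases are checked, the remaining rules are routine congruence or freshness bookkeeping, and the three items fall out together.
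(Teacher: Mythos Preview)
Your proposal is correct and follows essentially the same approach as the paper: induction on the derivation of the transition with a case analysis on the last rule applied, handling the $\sigma$ cases by showing $\forgett$ erases the relevant top-level constructor and the $\alpha$ cases by matching each $\rtpl$ rule to the corresponding CCSK rule (with timeouts mapped to the CCSK sum rules). One small imprecision: the CCSK choice rules (\textsc{Sum1}/\textsc{Sum2}) carry a $\std$ side condition, not $\nact$; your argument still goes through because in reachable configurations the idle branch of $\timedl{X}{Y}{i}$ and $\timedr{X}{Y}{i}$ is in fact standard (from the $\std$ premises of \textsc{Tout}/\textsc{STout}), and $\forgett$ sends standard processes to standard processes.
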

\begin{proof}
\begin{description}
\item[(1)] by induction on the derivation $X\fwd{\alpha\colorkey{i}} Y$, with a case analysis on the last applied rule.
The proof goes along the lines of the proof of Proposition~\ref{prop:totpl}, using the rules 
reported in Figures \ref{fig:ccskfwlts} and \ref{fig:ccskbklts} in Appendix~\ref{app:ccsk}.
\item[(2)] similar to the case above, using $X\bk{\alpha\colorkey{i}} Y$ instead of $X\fwd{\alpha\colorkey{i}} Y$.

\item[(3)] by induction on the derivation of $X\fwd{\sigma\colorkey{i}} Y$ (the case of backward transitions is analogous), with a case analysis on the last applied rule. Basic cases are (i) rules [\textsc{Pact}] and [\textsc{Idle}], creating a $\ghostlab$, (ii) $\sigma$ prefixes, and (iii) timeouts. In case (i) we have $Y=\ghost{i}.X$, hence $\forgett(Y) = \forgett(X)$. In case (ii) we have $\sigma.P \fwd{\sigma\colorkey{i}} \sigma\colorkey{i}.P$, hence $\forgett(\sigma.P)=P=\forgett(\sigma\colorkey{i}.P)$. In case (iii) we have $\timed{X_1}{X_2} \fwd{\sigma\colorkey{i}} \timedr{X_1}{X_2}{i}$ with $\forgett(\timed{X_1}{X_2})=\forgett(X_1)+\forgett(X_2)=\forgett(\timedr{X_1}{X_2}{i})$ as desired. Inductive cases follow by inductive hypothesis. \qedhere
\end{description}
\end{proof}

Notably, it is not always the case that transitions of the underlying untimed configuration can be matched in a timed setting. Think, e.g., of the Erlang program in Example~\ref{ex:1} (and its formalisation in Section~\ref{sec:tplov}), for a counterexample. Indeed, in the example, the communication between the two processes \lstinline|A| and \lstinline|B| is allowed in the underlying untimed model, but ruled out by incompatible timing constraints.
Also, let us consider the simple process $X = \timed{a}{b}$ and its untimed version $\forgett(X) = a+b$. The untimed process can execute the right branch as follows 
$$\forgett(X) \fwd{b\colorkey{i}} a + b\colorkey{i}$$
To match this action, $X$ has first to perform a time action and only afterwards it can take the $b$ action, as follows:
$$X\fwd{\sigma\colorkey{j}} \timedr{a}{b}{j} \fwd{b\colorkey{i}} \timedr{a}{b\colorkey{i}}{j}$$
Moreover, there are also cases where actions cannot be matched, not even after time actions. Indeed, the timeout operator $\timed{P}{Q}$ acts as a choice with left priority. For example, let us consider the process 
$X= \timed{(a \parallel \co{a})}{b}$. We have that $\forgett(X)$ can perform the $b$ action as follows  
\[\forgett(X) = (a\parallel \co{a}) + b \fwd{b\colorkey{i}} (a\parallel \co{a}) + b\colorkey{i}\]
but this action can never be matched by $X$, as $\rtpl$ maximal progress forces the internal synchronisation over time passage. Hence we can only apply rule \textsc{STout} in Figure~\ref{fig:fw}:
\[\timed{(a \parallel \co{a})}{b} \fwd{\tau\colorkey{i}} \timedl{a\colorkey{i} \parallel \co{a}\colorkey{i}}{b}{i}\]
and the resulting configuration cannot execute $b$.

Due to the examples above, we cannot characterise the relation between $X$ and $\forgett(X)$ as a bisimulation, as for $\forgeth$, but we can only prove that $\forgett(X)$ simulates $X$.
As before, we need to modify notions of simulation for reversible configurations from the literature (e.g., \cite{NicolaMV90,LaneseP21}) to relate configurations from two calculi, and to keep time into account.

\begin{defi}[Back and forward simulation]\label{def:sim}
A binary relation $\mathcal{R}$ on $\mathcal{X} \times\mathcal{X}^{k}$
is a back and forward simulation if $(X,R)\in \mathcal{R}$ implies that
\begin{enumerate}
\item if $X\fwd{\alpha\colorkey{i}} Y$, then there exists $S$ such that $R\xrightarrow{\alpha\colorkey{i}}_{\mathtt{k}}S$ and $(Y,S)\in \mathcal{R}$;
\item if $X\bk{\alpha\colorkey{i}} Y$, then there exists $S$ such that $R\bk{\alpha\colorkey{i}}_{\mathtt{k}}S$ and $(Y,S)\in \mathcal{R}$;
\item if $X\red{\sigma\colorkey{i}} Y$, then $(Y,R)\in \mathcal{R}$.
\end{enumerate}
\end{defi}
The
largest back and forward simulation is denoted by  $\precsim$.
\begin{thm}\label{BehavForgetT}
For each reachable $\rtpl$ configuration $X$ we have that 
$X \precsim \forgett(X)$.
\end{thm}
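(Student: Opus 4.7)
The plan is to exhibit the obvious candidate relation and verify it is a back and forward simulation by directly appealing to Proposition~\ref{prop:rtpl_to_ccsk}. Concretely, I define
\[
\mathcal{R} \;=\; \{\,(X, \forgett(X)) \mid X \text{ is a reachable } \rtpl \text{ configuration}\,\}
\]
and show that $\mathcal{R}$ satisfies clauses (1)--(3) of Definition~\ref{def:sim}; since $(X,\forgett(X))\in\mathcal{R}$ for every reachable $X$, this gives $X \precsim \forgett(X)$.

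For closure of $\mathcal{R}$ under transitions, observe that the set of reachable configurations is closed under both $\fwd{}$ and $\bk{}$: if $X$ is reachable and $X \red{} Y$, then $Y$ is reachable by definition. Hence whenever $X$ moves to some $Y$, the pair $(Y,\forgett(Y))$ is again in $\mathcal{R}$, so it suffices to match the labels in the CCSK component.

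The matching of labels is then a direct corollary of Proposition~\ref{prop:rtpl_to_ccsk}. For clause (1), if $X \fwd{\alpha\colorkey{i}} Y$, then part (1) of that proposition gives $\forgett(X) \fwd{\alpha\colorkey{i}}_{\mathtt{k}} \forgett(Y)$, and we take $S := \forgett(Y)$. For clause (2), symmetrically, part (2) gives $\forgett(X) \bk{\alpha\colorkey{i}}_{\mathtt{k}} \forgett(Y)$ from $X \bk{\alpha\colorkey{i}} Y$. For clause (3), if $X \red{\sigma\colorkey{i}} Y$, then part (3) gives $\forgett(X) = \forgett(Y)$, so $(Y, \forgett(X)) = (Y, \forgett(Y)) \in \mathcal{R}$ as required.

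There is essentially no obstacle beyond invoking the previous proposition; the only delicate point to notice is the asymmetry with Theorem~\ref{BehavForgetH}: here we are not claiming a bisimulation, since, as the counterexamples in the paragraph following Proposition~\ref{prop:rtpl_to_ccsk} show (the $\timed{a}{b}$ example, which requires a $\sigma$-step, and the $\timed{(a\parallel\co{a})}{b}$ example, which cannot match the $b$-action at all), the CCSK side can exhibit behaviour not available on the $\rtpl$ side. The chosen clauses of Definition~\ref{def:sim} only require the $\rtpl$ transitions to be matched by CCSK ones, so these counterexamples do not affect the argument, and the simulation goes through.
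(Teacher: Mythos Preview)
Your proof is correct and follows essentially the same approach as the paper: exhibit the relation $\{(X,\forgett(X))\}$ and verify the three clauses of Definition~\ref{def:sim} by direct appeal to Proposition~\ref{prop:rtpl_to_ccsk}. Your version is in fact slightly more careful than the paper's, since you restrict $\mathcal{R}$ to reachable configurations and explicitly note that reachability is preserved under $\red{}$, which is exactly what is needed to invoke Proposition~\ref{prop:rtpl_to_ccsk} at each step.
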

\begin{proof}
It is sufficient to show that the relation
$\mathcal{R} = \{(X,R) \mid   \forgett(X) = R\}$
is a back and forward simulation. It is easy to check the conditions of Definition~\ref{def:sim} using Proposition~\ref{prop:rtpl_to_ccsk}.
\end{proof}

\begin{figure}
\[
\xymatrix@C=1.5cm
  {
&\rtpl \ar[dl] _{\forgett}^{Prop.~\ref{prop:rtpl_to_ccsk}} \ar[dr] ^{\forgeth}_{Prop.~\ref{prop:totpl}} &\\
CCSK \ar[dr]^{\forgeth}_{Prop.~\ref{prop:totpl}}& &TPL \ar[dl]_{\forgett}^{Prop.~\ref{prop:rtpl_to_ccsk}}  \\
& CCS &
}
\]
\caption{Forgetting maps.}
\label{fig:commute}
\end{figure}

Figure~\ref{fig:commute} summarises our results:  if we remove the timed behaviour from a $\rtpl$ configuration we get a CCSK term, with the same behaviour apart for timed aspects, thanks to
Proposition~\ref{prop:rtpl_to_ccsk}. On the other side, if from $\rtpl$ we remove history information we get a TPL term (matching its forward behaviour thanks to Proposition~\ref{prop:totpl}). 
Note that the same forgetful maps (and properties) justify the arrows in the bottom part of the diagram, as discussed above. This is in line with Theorem 5.21 of~\cite{ccsk}, showing that by removing reversibility and history information from CCSK we get CCS.
Notably the two forgetting maps commute.
\begin{prop}\label{prop:commute}
For each reachable $\rtpl$ configuration $X$ we have $\forgeth(\forgett(X))=\forgett(\forgeth(X))$.
\end{prop}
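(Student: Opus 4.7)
My plan is to prove this by structural induction on the configuration $X$. Both maps are defined compositionally, so for the ``transparent'' constructors---prefix $\rho\colorkey{i}.X'$, parallel $X_1 \parallel X_2$, restriction $X \setminus a$, recursive call $A$, and basic $\nil$---both $\forgeth \circ \forgett$ and $\forgett \circ \forgeth$ commute with the constructor, and the equality follows immediately from the inductive hypothesis. The case of a standard $X = P$ is trivial since $\forgeth$ is the identity on standard configurations and $\forgett(P)$ is itself standard, so $\forgeth$ is again the identity.

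The critical preliminary step is an auxiliary lemma stating that $\forgett$ preserves the \emph{not-acted} predicate, i.e.\ $\nact(X) = \nact(\forgett(X))$ for every reachable $X$. This holds because $\forgett$ only erases time-related syntax ($\sigma$ prefixes, $\ghostlab$ prefixes, and timeouts are replaced by sums); inspecting \cref{def:nact} one sees that none of these alter acted-ness, e.g.\ $\nact(\ghost{i}.X) = \nact(X)$ matches $\nact(\forgett(\ghost{i}.X)) = \nact(\forgett(X))$, and $\nact(\timedl{X}{Y}{i}) = \mathit{false}$ matches the fact that $\forgett(X) + \forgett(Y)$ has the acted left branch $\forgett(X)$ (by IH on this lemma).

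With this lemma in hand, the case $X = X_1 + X_2$ is a three-way split on which of $X_1, X_2$ is acted: in each subcase the definitions of $\forgeth$ on a sum pick exactly the same branch on each side of the equation, and the IH closes the argument. For the decorated timeouts, I rely on reachability invariants derivable from the forward and backward rules: $\timedl{X_1}{X_2}{i}$ is produced only by rule \textsc{Tout}, so $X_1$ is acted and $X_2$ remains standard along any derivation; symmetrically, $\timedr{X_1}{X_2}{i}$ is produced only by \textsc{STout}, so $X_1$ stays standard while $X_2$ can evolve. For $\timedl{X_1}{X_2}{i}$, I then get $\forgeth(\forgett(X)) = \forgeth(\forgett(X_1) + \forgett(X_2)) = \forgeth(\forgett(X_1))$ because $\forgett(X_1)$ is acted and $\forgett(X_2)$ is not, matching $\forgett(\forgeth(X)) = \forgett(\forgeth(X_1))$ via the IH.

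The main obstacle is the $\timedr{X_1}{X_2}{i}$ case, where one must reconcile $\forgeth(X) = \forgeth(X_2)$ with $\forgett(X) = \forgett(X_1) + \forgett(X_2)$. The argument goes through cleanly when $X_2$ is acted, by the same pattern as above. When $X_2$ is not acted but non-standard (it has performed only time actions after the timeout triggered), one needs to exploit the fact that $\forgett$ erases all those time actions from $X_2$, so both $\forgett(X_1)$ and $\forgett(X_2)$ collapse to their common time-free skeleton; this is where I would carefully track the reachability invariant and use the fact that $\forgett$ commutes with both the creation of ghost prefixes and $\sigma$-labelled time moves (\cref{prop:rtpl_to_ccsk}, clause~(3)) to identify $\forgett(X_1)$ and $\forgett(\forgeth(X_2))$ structurally. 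I expect this to be the technically delicate part of the proof, requiring a strengthened induction hypothesis that keeps the interplay between ghost prefixes, the $\nact$ predicate, and the timeout branches under control.
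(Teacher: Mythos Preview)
Your overall approach---structural induction on $X$---is exactly what the paper does (its entire proof is the single line ``By structural induction on $X$''). You have also correctly isolated $X = \timedr{X_1}{X_2}{i}$ as the crux. But your proposed resolution of that case does not go through, and in fact the identity fails there.

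Your claim is that when $X_2$ is not-acted, ``both $\forgett(X_1)$ and $\forgett(X_2)$ collapse to their common time-free skeleton.'' This is false: $X_1$ and $X_2$ are the two \emph{different} branches of a timeout (the main body and the handler), and nothing forces them to share any structure. Concretely, take the reachable configuration $X = \timedr{a.\nil}{b.\nil}{i}$, obtained from $\timed{a.\nil}{b.\nil}$ via rule [\textsc{STout}]. Then
\[
\forgett(\forgeth(X)) \;=\; \forgett(\forgeth(b.\nil)) \;=\; b.\nil,
\]
whereas
\[
\forgeth(\forgett(X)) \;=\; \forgeth\big(a.\nil + b.\nil\big) \;=\; a.\nil + b.\nil,
\]
since both summands are not-acted and $\forgeth$ therefore keeps the sum. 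So the ``technically delicate part'' you anticipated cannot be closed by strengthening the induction hypothesis or by tracking ghost prefixes more carefully: the syntactic equality simply does not hold on this configuration. The paper's one-line proof glosses over this same case, so the gap is not on your side relative to the paper; rather, the statement would need to be weakened (e.g., to a behavioural equivalence) or the clause $\forgett(\timedr{X}{Y}{i}) = \forgett(X) + \forgett(Y)$ revisited for it to hold as written.
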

\begin{proof}
By structural induction on $X$.
\end{proof}

\section{Reversibility in $\rtpl$}\label{sec:causality}
%!TEX root = main.tex
\label{sec:loop}
In a fully reversible calculus any computation can be undone.
 This is a fundamental property of reversibility~\cite{rccs,LanesePU20}, called the Loop Lemma,
and $\rtpl$ enjoys it. Formally:%

\begin{lem}[Loop Lemma]
\label{lem:loop}
If $X$ is a reachable $\rtpl$ configuration, then
$X \fwd{\pi\colorkey{i}} X' \iff X' \bk{\pi\colorkey{i}} X$
\end{lem}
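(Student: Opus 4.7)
The plan is to prove the biconditional by splitting into two implications and, for each, performing induction on the derivation of the given transition with a case analysis on the last rule applied. The proof should be driven by the observation, explicitly emphasised in the paper, that each backward rule in Figure~\ref{fig:bk} is the mirror image of a forward rule in Figure~\ref{fig:fw}: same shape, same side conditions (modulo direction), and same action/key on the label. So the proof is essentially a rule-by-rule matching exercise.

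For the forward direction ($X\fwd{\pi\colorkey{i}}X' \Rightarrow X'\bk{\pi\colorkey{i}}X$) I would treat the axioms first. Rules [\textsc{RAct}], [\textsc{PAct}] and [\textsc{Idle}] each produce a configuration whose head is the very prefix that the matching backward axiom [\textsc{Ract}], [\textsc{Pact}] or [\textsc{Idle}] consumes; these cases are immediate. For congruence-style rules with only positive premises ([\textsc{Act}], [\textsc{Par}], [\textsc{Cho}], [\textsc{ChoW}], [\textsc{SWait}], [\textsc{Wait}], [\textsc{Tout}], [\textsc{Syn}], [\textsc{Hide}], [\textsc{Const}]) the induction hypothesis supplies backward transitions on the subderivations, and one then concludes by applying the corresponding backward rule; the side conditions ($j\neq i$, $i\notin\key(Y)$, $\nact(Y)$, $\std(Y)$, etc.) are structural properties of the unchanged subterms and so transfer unchanged.

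The interesting cases are those carrying negative premises, namely [\textsc{STout}] and [\textsc{SynW}]. The former is harmless: the premise $X\not\fwd{\tau}$ concerns the $X$ inside $\timed{X}{Y}$, which is identical (as a subterm) to the $X$ inside $\timedr{X}{Y}{i}$, so the same premise required by the backward [\textsc{STout}] is already in hand. The genuinely delicate case is [\textsc{SynW}]: from the forward premise $(X\parallel Y)\not\fwd{\tau}$ I need to deduce the backward premise $(X'\parallel Y')\not\bk{\tau}$, where $X\fwd{\sigma\colorkey{i}}X'$ and $Y\fwd{\sigma\colorkey{i}}Y'$. I expect this to be the main obstacle. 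My plan is to handle it with a small auxiliary observation: a $\sigma\colorkey{i}$ step only adds $\ghost{i}$ or $\sigma\colorkey{i}$ prefixes and, in particular, does not introduce any new \emph{keyed} pair $(\alpha\colorkey{j},\co{\alpha}\colorkey{j})$ that could be undone; hence any backward $\tau$ available in $X'\parallel Y'$ would already have been available as a backward $\tau$ in $X\parallel Y$, say $X\parallel Y\bk{\tau\colorkey{j}} Z$. Using reachability of $X\parallel Y$ and the Square Property (or a direct structural check that backward $\tau$ in a reachable parallel composition implies a matching forward $\tau$ after the $\sigma$ step, which is blocked by the forward premise), one derives a contradiction with $(X\parallel Y)\not\fwd{\tau}$. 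This is the step I would develop most carefully.

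The backward direction ($X'\bk{\pi\colorkey{i}}X\Rightarrow X\fwd{\pi\colorkey{i}}X'$) proceeds by a completely symmetric induction on the backward derivation, using the forward rules to reconstruct each step; the negative-premise cases are handled by the dual argument, observing that undoing a $\sigma$ step can only remove $\ghost{i}$ or $\sigma\colorkey{i}$ wrappings and therefore preserves (and exposes) the $\tau$ capabilities needed to discharge the forward side condition.
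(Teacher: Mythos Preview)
Your approach is essentially the paper's: induction on the derivation with a case analysis on the last rule, matching each forward rule against its backward mirror. The paper is in fact less detailed than you are: it declares the forward implication to ``trivially hold'' by rule symmetry, and for the backward implication it only exhibits three sample cases ([\textsc{ChoW}], [\textsc{Syn}], [\textsc{Cho}]), none of which carries a negative premise. So you are right to single out [\textsc{SynW}] as the one place where something non-mechanical happens.

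Two points, however, deserve tightening. First, invoking the Square Property to discharge the forward [\textsc{SynW}] case would be methodologically circular in this paper's organisation (SP is stated and proved after the Loop Lemma). Your parenthetical ``direct structural check'' is the correct route and can be made precise without SP: by the inductive hypothesis on the components you obtain $X'\bk{\sigma\colorkey{i}}X$ and $Y'\bk{\sigma\colorkey{i}}Y$, and the purely structural argument underlying Proposition~\ref{prop:backalphasigma} (whose proof is by structural induction and does not rely on the Loop Lemma) gives that a configuration with a backward $\sigma$ transition has no backward communication transition; hence $X'\parallel Y'\not\bk{\tau}$ and backward [\textsc{SynW}] applies.

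Second, your claim that the backward implication for [\textsc{SynW}] is handled ``by the dual argument'' is too quick. There you must establish $(X'\parallel Y')\not\fwd{\tau}$ from the premise $(X\parallel Y)\not\bk{\tau}$, and there is no forward analogue of Proposition~\ref{prop:backalphasigma} (a configuration such as $\alpha.P$ happily admits both a forward $\sigma$ and a forward $\alpha$). Indeed, the unreachable configuration $\ghost{i}.a.\nil \parallel \ghost{i}.\co{a}.\nil$ satisfies the backward [\textsc{SynW}] premises yet its $\sigma$-predecessor $a.\nil\parallel\co{a}.\nil$ does admit a forward $\tau$; so reachability is essential here, and the argument is not symmetric to the forward case. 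The paper's proof also skips this point, but you should not: the honest fix is a small invariant on reachable configurations (that the outermost $\sigma$-history layer can only have been laid down via forward [\textsc{SynW}], whose premise is exactly what you need), proved by an auxiliary induction on the reachability derivation.
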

\begin{proof}

We have two directions. The forward one trivially holds, since for each
forward rule of
Figure~\ref{fig:fw} there exists a symmetric one backwards in Figure~\ref{fig:bk}. The backward case
requires more attention, and we proceed by induction on $X \bk{\pi\colorkey{i}} Y$, with
a case analysis on the last applied rule. We can further distinguish the cases according 
to whether $\pi = \sigma$, $\pi = \tau$ or $\pi = \alpha$. We will just consider one instance of each case, the others are similar.
\begin{description}
	\item[$\pi = \sigma$] using rule [\textsc{ChoW}] we have that $X = X_1 + X_2$,
	 $X_1\bk{\sigma{\colorkey{i}}} X'_1$, $X_2\bk{\sigma{\colorkey{i}}} X'_2$. 
       Since by reachability of $X$ we have the reachability of $X_1$ and $X_2$, by inductive hypothesis we have that
	 $X'_1\fwd{\sigma{\colorkey{i}}} X_1$ and $X'_2\fwd{\sigma{\colorkey{i}}} X_2$, so we can apply the forward
	 version of rule [\textsc{ChoW}], as desired.
	 \item [$\pi = \tau$] using rule [\textsc{Syn}] by hypothesis we have that $X= X_1 \parallel X_2$ with
	 $X_1\bk{\alpha\colorkey{i}} X'_1$ and $X_2\bk{\co\alpha\colorkey{i}} X'_2$. 
	 By applying the inductive hypothesis we get
$X'_1\fwd{\alpha\colorkey{i}} X_1$ and $X'_2\fwd{\co\alpha\colorkey{i}} X_2$, and we can derive 
$X'_1 \parallel X'_2 \fwd{\tau\colorkey{i}} X_1 \parallel X_2$, as desired.
\item[$\pi = \alpha$] using rule [\textsc{Cho}] by hypothesis we have that $X= X_1 +X_2$,
$X_1\bk{\alpha\colorkey{i}} X'_1$ and $\nact(X_2)$. By applying the inductive hypothesis we have that
$X'_1\fwd{\alpha\colorkey{i}} X'_1$ and we can derive $X'_1 + X_2 \fwd{\alpha\colorkey{i}} X_1 + X_2$ as desired. \qedhere
	 \end{description}
\end{proof}

Another fundamental property of causal-consistent reversibility is the so-called
causal-consitency~\cite{rccs,LanesePU20}, which essentially states
that we store the correct amount of causal information. 
In order to discuss it, we now borrow some definitions
from \cite{rccs}. We use $t, t', s, s'$ to range over transitions. In a transition $t:
X \red{\pi\colorkey{i}} Y$ we call $X$ the \emph{source} of the
transition, and $Y$ the \emph{target} of the transition. Two
transitions are said to be
\emph{coinitial} if they have the same source, and \emph{cofinal} if they have the same target. Given a transition
$t$, we indicate with $\underline{t}$ its reverse, that is if $t: X\fwd{\pi\colorkey{i}}Y$ (resp., $t: X\bk{\pi\colorkey{i}}Y$)
then $\underline{t}: Y\bk{\pi\colorkey{i}}X$ (resp., $\underline{t}: Y\fwd{\pi\colorkey{i}}X$). 
The notions of source, target, coinitiality, and cofinality naturally extend to paths.
We let $\chi,\omega$ to range
over sequences of transitions, which we call \emph{paths}, and with $\void_X$ we indicate the empty sequence starting and ending at $X$. We denote as $|\chi|$ the number of transitions in path $\chi$. 
Moreover, we indicate with $\chi_1\chi_2$ the composition of the two paths $\chi_1$ and $\chi_2$ when they are composable, that is when the target of $\chi_1$ coincides with the source of $\chi_2$.

\begin{defi}[Causal Equivalence]  Let $\ceq$ be the smallest equivalence on paths closed under composition and satisfying:
\begin{enumerate}
	\item if $t: X \red{\pi_1\colorkey{i}} Y_1$ and $s: X \red{\pi_2\colorkey{j}} Y_2$ are independent, and
	$s': Y_1\red{\pi_2\colorkey{j}} Z$, $t': Y_2\red{\pi_1\colorkey{i}} Z$
        then
	$ts' \ceq st'$;
	\item $t\underline{t} \ceq \void$ and $\underline{t}t \ceq \void$
\end{enumerate}
\end{defi} 
Intuitively, paths are causal equivalent if they differ only for
swapping independent transitions (we will discuss independence below)
and for adding do-undo or undo-redo pairs of transitions.

\begin{defi}[Causal Consistency (CC)] An LTS is causal consistent if for any coinitial and cofinal paths $\chi$ and $\omega$ we have
$\chi \ceq \omega$.
\end{defi}
Intuitively, if coinitial paths are cofinal then they have the same
causal information and can reverse in the same ways: we want only
causal equivalent paths to reverse in the same ways.

\subsection{Independence}
\label{sec:indepe}

We now define a notion of independence between $\rtpl$ coinitial transitions, based on a causality preorder (inspired by~\cite{LaneseP21}) on keys. Intuitively, independent transitions can be executed in any order (we will formalise this as Property~\ref{prp:square}), while transitions which are not independent represent a choice: either one is executed, or the other.

\begin{defi}[Partial order on keys]
The function $\ord(\cdot):\procs \mapsto 2^{(\keys \times \keys)}$ is inductively defined below.
It takes a configuration $X \in \procs$ and computes a set of ordered pairs of keys which is the set of causal relations among the keys in $X$.
\small
\[
\begin{array}{l}
\ord(P)=\emptyset   \qquad \qquad \ord(\res{X}{a}) = \ord(X)\\
 \ord(X\parallel Y) = \ord(X + Y) = \ord(\timed{X}{Y}) = \ord(X) \cup \ord(Y) \\
\ord(\rho\colorkey{i}.X)  = \ord(\timedl{X}{Y}{i}) = \ord(\timedr{Y}{X}{i}) = \{i < j \mid j\in \key(X)\} \cup \ord(X)
\end{array}
\]\normalsize
The partial order $\leq_{X}$ on $\key(X)$ is the reflexive and transitive closure of $\ord(X)$.
\end{defi}
Let us note that function $\ord$ computes a partial order relation, namely a set of pairs $(i,j)$, denoted $i < j$ to stress that they form a partial order. In particular, $i < j$ means that key $i$ causes key $j$. This takes into account both structural causality given by the structure of a configuration (e.g., a prefix causes its continuation) and causality raising from synchronisation and time, since synchronising actions and time actions corresponding to the same point in time have the same key.

\begin{example}
Let us compute the partial order on keys in  
$$\timedr{a}{b\colorkey{j}.P}{i} \parallel \ghost{i}.c\colorkey{k}.d\colorkey{w}.Q \parallel \ghost{i}.\co{c}\colorkey{k}.R$$ 
We have:
\[
\begin{array}{llll}
\ord(\timedr{a}{b\colorkey{j}.P}{i}) & = \{i < j\} \cup \ord(P) = \{i < j\} \cup \emptyset = \{i < j\} \\
\ord(\ghost{i}.c\colorkey{k}.d\colorkey{w}.Q) & = \{i < k \} \cup \{i < w \}  \cup \ord(c\colorkey{k}.d\colorkey{w}.Q)    \\
& =  \{i < k \} \cup \{i < w \}  \cup \{k < w\} \cup \ord(d\colorkey{w}.Q) \\
&= \{i < k \} \cup \{i < w \}  \cup \{k < w\} \cup \emptyset \cup \ord(Q) \\
&= \{i < k \} \cup \{i < w \}  \cup \{k < w\}\\
\ord(\ghost{i}.\co{c}\colorkey{k}.R) & =  \{i < k\} \cup \ord(R) = \{i < k\}
\end{array}
\]
and hence, looking at the parallel composition:
\begin{align*}  % used align environment to move the diamond to last line of example
&\ord(\timedr{a}{b\colorkey{j}.P}{i} \parallel \ghost{i}.c\colorkey{k}.d\colorkey{w}.Q \parallel \ghost{i}.\co{c}\colorkey{k}.R) = \\
&\{i<j\}\cup \{i < k \} \cup \{i < w \}  \cup \{k < w\} \cup \{i < k\}
=  \{i < j,\, i < k, \, i< w, \, k < w\} \tag*{$\diamond$}
\end{align*}
\end{example}

We also need to understand whether two forward communication transitions are in conflict since either they involve a same prefix or they involve different branches of a choice.

\begin{defi}[Forward communication conflict]
Two forward communication transitions $t_1: X \fwdns{\alpha_1\colorkey{i}} Y$ and $t_2: X \fwdns{\alpha_2\colorkey{j}} Z$ with $i \neq j$ are in forward communication conflict iff the $\fcc(Y,Z)$ predicate defined below holds:
\[
\begin{array}{lcl}
\fcc(\alpha\colorkey{i}.P,\alpha\colorkey{j}.P) & = & True\\
\fcc(P,P) & = & False\\
\fcc(Y_1 \parallel Y_2, Z_1 \parallel Z_2) & = & \fcc(Y_1,Z_2) \lor \fcc(Y_2,Z_2)\\
\fcc(Y_1 + Y_2,Z_1 + Z_2) & = &  (Y_1 \neq Z_1 \land Y_2 \neq Z_2) \lor \fcc(Y_1,Z_1) \lor \fcc(Y_2,Z_2) \\
\fcc(\res{Y_1}{a},\res{Z_1}{a}) & = & \fcc(Y_1,Z_1)\\
\fcc(\rho\colorkey{i}.Y_1,\rho\colorkey{i}.Z_1) & = & \fcc(Y_1,Z_1)\\
\fcc(\timedl{Y_1}{Y_2}{i},\timedl{Z_1}{Z_2}{j}) & = & \fcc(Y_1,Z_1)\\
\fcc(\timedl{Y_1}{Y_2}{i},\timedl{Z_1}{Z_2}{i}) & = & \fcc(Y_1,Z_1)\\
\fcc(\timedr{Y_1}{Y_2}{i},\timedr{Z_1}{Z_2}{i}) & = & \fcc(Y_2,Z_2)\\
\end{array}
\]
\end{defi}
For simplicity, the $\fcc$ predicate above is defined only for pairs of configurations which may arise from the same configuration.
Notice that in the clause for choice, the only way for the two
branches to be pairwise different, is that $Y$ has chosen one of them,
and $Z$ the other. In this case the two actions are in conflict. However, in this case $\fcc$ may not be defined on the components. To avoid this issue,
we consider the $\lor$ operator to be a short circuit operator.
\begin{example}
Let us consider the configuration $X_1 = a\colorkey{i}.b.P$ and the two transitions
\begin{itemize}
\item $t_1: X_1 \fwd{b\colorkey{j}}a\colorkey{i}.b\colorkey{j}.P = Y_1$ and 
\item $t_2: X_1 \fwd{b\colorkey{z}}a\colorkey{i}.b\colorkey{z}.P = Z_1$.
\end{itemize}
We have that 
\begin{align*}
\fcc(a\colorkey{i}.b\colorkey{j}.P, a\colorkey{i}.b\colorkey{z}.P) = 
\fcc(b\colorkey{j}.P,b\colorkey{z}.P) = True
\end{align*}
Let us consider the configuration $X_2 = \timedr{a.\nil}{b\colorkey{j}.(a.P + b.Q)}{i}$ and the two transitions
\begin{itemize}
\item $t_3: X_2 \fwd{a\colorkey{z}}\timedr{a.\nil}{b\colorkey{j}.(a\colorkey{z}.P + b.Q)}{i} = Y_2$ and 
\item $t_4: X_2 \fwd{a\colorkey{w}}\timedr{a.\nil}{b\colorkey{j}.(a.P + b\colorkey{w}.Q)}{i} = Z_2$.
\end{itemize}
We have that 
\begin{align*}
&\fcc\big(\timedr{a.\nil}{b\colorkey{j}.(a\colorkey{z}.P + b.Q)}{i}, \timedr{a.\nil}{b\colorkey{j}.(a.P + b\colorkey{w}.Q)}{i}\big) =\\
&\fcc\big(b\colorkey{j}.(a\colorkey{z}.P + b.Q), b\colorkey{j}.(a.P + b\colorkey{w}.Q)\big) = 
\fcc\big(a\colorkey{z}.P + b.Q, a.P + b\colorkey{w}.Q\big) = True
\end{align*} \finex
\end{example}
\begin{lem}
Function $\fcc$ above is total for each $Y$ and $Z$ obtained via
communication actions from a common $X$.
\end{lem}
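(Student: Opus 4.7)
The plan is to proceed by structural induction on $X$, with a case analysis on the pair of forward communication rules of Figure~\ref{fig:fw} used to derive the two transitions $t_1: X \fwd{\alpha_1\colorkey{i}} Y$ and $t_2: X \fwd{\alpha_2\colorkey{j}} Z$ with $i\neq j$. For each case, I exhibit the clause of $\fcc$ that applies to the pair $(Y,Z)$ and check that every recursive invocation actually evaluated falls under a pair of configurations arising from a common source, so the induction hypothesis closes the argument.

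First I would dispatch the cases in which both transitions share a common outer syntactic layer. When $X = \rho\colorkey{k}.X'$, the only applicable rule is [\textsc{Act}], giving $Y = \rho\colorkey{k}.Y'$ and $Z = \rho\colorkey{k}.Z'$; then $\fcc(Y,Z)$ reduces to $\fcc(Y',Z')$, which is defined by the induction hypothesis on $X'$. The cases $X = \res{X'}{a}$ (rule [\textsc{Hide}]) and $X = A$ (rule [\textsc{Const}]) are analogous. The basic case $X = \alpha.P$ is handled by two applications of [\textsc{RAct}] with distinct keys, matching the first clause of $\fcc$ with value $\mathit{True}$. For $X = \timedl{X_1}{X_2}{k}$ and $X = \timedr{X_1}{X_2}{k}$ the reasoning is similar via [\textsc{Wait}] and [\textsc{SWait}], each reducing $\fcc$ to a smaller instance covered by induction. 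Finally, for $X = \timed{P}{Q}$ the only rule yielding a communication action is [\textsc{Tout}], producing $Y = \timedl{Y_1}{Q}{i}$ and $Z = \timedl{Z_1}{Q}{j}$, a shape for which the dedicated $\fcc$-clauses (with either equal or distinct annotation keys) apply, reducing to $\fcc(Y_1,Z_1)$ with $Y_1,Z_1$ arising from the common source $P$.

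The two remaining cases are parallel composition and choice. For $X = X_1 \parallel X_2$ the possible rules are [\textsc{Par}], its symmetric version, and [\textsc{Syn}]; in every combination $Y$ and $Z$ have the shape $Y_1 \parallel Y_2$ and $Z_1 \parallel Z_2$ where each component is either obtained from the corresponding component of $X$ via a communication action or left unchanged. Hence every recursive call of $\fcc$ that needs to be evaluated is on a pair arising from a common sub-configuration, and the induction hypothesis applies. The choice case $X = X_1 + X_2$ is the delicate one: if both transitions use [\textsc{Cho}] on the same branch, the induction hypothesis on that branch suffices; if instead they pick opposite branches, we obtain $Y = Y_1 + X_2$ and $Z = X_1 + Z_2$. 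The key observation here is that any forward transition strictly enlarges the set of keys in its target (a straightforward side lemma by inspection of the rules of Figure~\ref{fig:fw}), so $Y_1 \neq X_1$ and $Z_2 \neq X_2$; therefore the first disjunct $(Y_1 \neq X_1) \land (X_2 \neq Z_2)$ of the choice clause of $\fcc$ evaluates to $\mathit{True}$, and by the short-circuit semantics of $\lor$ no further recursive call needs to be performed.

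The main obstacle is precisely this choice case: one must justify that the short-circuit convention is enough to avoid evaluating recursive calls on pairs that may not arise from a common source. Once the small monotonicity observation about keys in forward transitions is in hand, the short-circuit resolves the issue, and the remaining verifications across all rules are routine bookkeeping.
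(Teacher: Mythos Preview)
Your proposal is correct and follows essentially the same structural-induction-on-$X$ strategy as the paper; your treatment of the choice case, in particular the observation that forward transitions enlarge the key set so the first disjunct fires and the short-circuit $\lor$ avoids the ill-defined recursive calls, makes explicit what the paper establishes in the prose preceding the lemma. One small point you gloss over: for $X = A$ with $A \stackrel{def}{=} P$, the body $P$ is not structurally smaller than $A$, so plain structural induction does not apply; the paper notes this and appeals to the guardedness of recursion to recover termination.
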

\begin{proof}
We proceed by structural induction on $X$, with a case analysis on the rules used to derive the two transitions.
\begin{description}
\item[$X=\pi.P$] the only possibility here is that the prefix is executed, with two different keys, this case is covered by the first clause;
\item[$X=\timed{P}{Q}$] here the only possibility is that the first component is executed, this case is covered by the first clause for timeout;
\item[$X=X_1+X_2$] this case is covered by the fourth clause;
\item[$X=X_1 \parallel X_2$] this case is covered by the third clause;
\item[$X=\res{X_1}{a}$] this case is covered by the fifth clause;
\item[$X=A$] constant $A$ has a definition $A\stackrel{def}{=}P$, hence the proof for $P$ applies. Note that in this case termination by structural induction is not granted, but termination is ensured anyway since recursion is guarded;
\item[$X=\nil$] since $X$ cannot take any communication action, this case never applies;
\item[$X=\rho\colorkey{i}.X_1$] this case is covered by the sixth clause;
\item[$X=\timedl{X_1}{X_2}{i}$] this case is covered by the one but last clause;
\item[$X=\timedr{X_1}{X_2}{i}$] this case is covered by the last clause. \qedhere
\end{description}
\end{proof}

We now define a notion of conflict, and independence as its negation.

\begin{defi}[Conflict and independence]\label{def:confl}\label{def:ind}
Given a reachable $\rtpl$ configuration $\conf{T}{X}$,
two coinititial transitions $t: X \redns{\pi_1\colorkey{i}} Y$ and $s: X \redns{\pi_2\colorkey{j}} Z$ are conflicting, 
if and only if one of the following conditions holds:
\begin{enumerate}
        \item\label{confl:alfasigma} $t: X\fwdns{ \sigma\colorkey{i}} Y$ and $s: X\fwdns{ \alpha\colorkey{j}}Z$ or vice versa;
        \item\label{confl:consumed} $t: X\fwdns{ \pi_1\colorkey{i}} Y$ and $s: X\fwdns{ \pi_2\colorkey{j}}Z$ are in forward communication conflict;
			\item\label{confl:bkfwd} $t: X\fwdns{\pi_1\colorkey{i}}Y$ and $s: X\bkns{\pi_2\colorkey{j}}Z$ with $\ckey{j} \leq_{Y} \ckey{i}$ or vice versa;
			\item\label{confl:key} $t: X\fwdns{ \pi_1\colorkey{i}} Y$ and $s: X\fwdns{ \pi_2\colorkey{j}}Z$ with $\ckey{i}=\ckey{j}$.
\end{enumerate}
Transitions $t$ and $s$ are \emph{independent}, written $\conc{t}{s}$, if %$t \neq s$ and
they are not conflicting. 
\end{defi}
Note that the conflict relation is reflexive and symmetric, hence independence is irreflexive and symmetric.
The first clause of Definition \ref{def:ind} tells us that a delay cannot be swapped with a communication action. Consider configuration $\timed{b.\nil}{\nil}$:  
\begin{align*}
\xymatrix{
&&  \timed{b.\nil}{\nil} \ar[dl]_{\sigma\colorkey{i}}  \ar[dr]^{ b\colorkey{j}}&\\
 &\timedr{b.\nil}{\nil}{i} 	&& \timedl{b\colorkey{j}.\nil}{\nil}{j}
	& }
\end{align*}
Transitions $\sigma\colorkey{i}$ and $b\colorkey{j}$ are in conflict: they cannot be swapped since action $b$ is no longer possible after action $\sigma$, and vice versa. 

The second case of Definition~ \ref{def:ind} forbids either a same prefix or prefixes in different branches of a same choice operator to be consumed by the two transitions.
For example let us consider the configuration $a.\nil\parallel \co{a}.\nil$. The left configuration could execute an action $a\colorkey{i}$ while the entire configuration could synchronise by doing a $\tau\colorkey{j}$, as depicted below:
\begin{align*}
\xymatrix{
&&  a.\nil \parallel \co a.\nil \ar[dl]_{a\colorkey{i}}   \ar[dr]^{\tau\colorkey{j}} \\
 &a\colorkey{i}.\nil \parallel \co a.\nil && a\colorkey{j}.\nil \parallel \co a\colorkey{j}.\nil
}
\end{align*}
It is clear, from the example above, that the two actions cannot commute.

Another example of conflicting transitions captured by case~\ref{confl:consumed} of Definition~\ref{def:confl} is when transitions consume prefixes in different branches of a same choice operator. For example, let us consider the configuration $a.\nil+b.\nil$. The left branch can do an action $a\colorkey{i}$ while the right one an action $b\colorkey{j}$, as follows:
\begin{align*}
\xymatrix{
&&  a.\nil + b.\nil \ar[dl]_{a\colorkey{i}}   \ar[dr]^{b\colorkey{j}} \\
 &a\colorkey{i}.\nil + b.\nil && a.\nil+ b\colorkey{j}.\nil
}
\end{align*}
and again it is clear that these two transitions cannot commute. 

The third clause of Definition~\ref{def:ind} dictates that two transitions are in conflict when a reverse step eliminates some causes of a forward step. For example, the configuration $a\colorkey{j}.b.\nil$ can do a forward step with label $b\colorkey{i}$ going to $a\colorkey{j}.b\colorkey{i}.\nil$ or a backward one with label $a\colorkey{j}$, as follows: 
\begin{align*}
\xymatrix{
&&  a\colorkey{j}.b.\nil  \ar@{_{(}->}  [dl]_{a\colorkey{j}}\ar[dr]^{b\colorkey{i}} \\
 &a.b.\nil && a\colorkey{j}.b\colorkey{i}.\nil
}
\end{align*}
We have that $\ord(a\colorkey{j}.b\colorkey{i}.\nil)=\{j<i\}$, hence the side condition is satisfied. 
Undoing $a\colorkey{j}$ disables the action on $b$. 

The last case of Definition \ref{def:ind} forbids two transitions to pick  up the same key.

Notably, backward transitions are never in conflict, yet it is never
the case that a backward time action and a backward communication
action are enabled together, as shown by the following proposition.

\begin{prop}\label{prop:backalphasigma}
Let $X$ be a reachable $\rtpl$ configuration. Then it is never the case that $X \bk{\sigma\colorkey{i}} X'$ and $X \bk{\alpha\colorkey{j}} X''$.
\end{prop}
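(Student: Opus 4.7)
The plan is to proceed by structural induction on $X$, analysing at each step which rules from Figure~\ref{fig:bk} can derive a backward $\sigma\colorkey{i}$-step and which can derive a backward $\alpha\colorkey{j}$-step (with $\alpha$ a communication action), and showing that the two cannot be enabled simultaneously. The base case $X = P$ is vacuous, since standard configurations have no backward transitions. For the inductive step we case-split on the top constructor of $X$.

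For $X = \rho\colorkey{k}.Y$, a backward $\sigma\colorkey{i}$-step is derived either by one of [\textsc{Ract}] bw, [\textsc{Pact}] bw, [\textsc{Idle}] bw (which all force $Y$ to be a standard process), or by [\textsc{Act}] bw (which requires $Y \bk{\sigma\colorkey{i}}$ with $i \neq k$). A backward $\alpha\colorkey{j}$-step is derived either by [\textsc{Ract}] bw (requiring $\rho = \alpha$ and $Y$ standard) or by [\textsc{Act}] bw (requiring $Y \bk{\alpha\colorkey{j}}$). In the first sub-case for $\sigma$, $Y$ is standard, so [\textsc{Act}] bw for $\alpha$ cannot fire; and [\textsc{Ract}] bw for $\alpha$ would require $\rho = \alpha$, contradicting $\rho \in \{\sigma, \ghostlab\}$. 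In the second sub-case, both labels require $Y$ to do the corresponding backward step, and the induction hypothesis on $Y$ closes the argument.

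For $X = \timedl{Y}{Z}{k}$, backward $\sigma\colorkey{i}$ arises only via [\textsc{Wait}] bw (so $Y \bk{\sigma\colorkey{i}}$), while backward $\alpha\colorkey{j}$ arises via [\textsc{Tout}] bw or [\textsc{Wait}] bw, in either case forcing $Y$ to perform a backward $\alpha$-step: the IH on $Y$ concludes. For $X = \timedr{X_1}{X_2}{k}$ the extra sub-case is backward $\sigma\colorkey{k}$ via [\textsc{STout}] bw, which forces $X_2$ standard and so rules out any backward $\alpha\colorkey{j}$ (the only candidate rule being [\textsc{SWait}] bw, which requires $X_2 \bk{\alpha\colorkey{j}}$); all remaining sub-cases reduce to the IH on $X_2$. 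For $X = X_1 \parallel X_2$ and $X = X_1 + X_2$, backward $\sigma\colorkey{i}$ is derivable only via [\textsc{SynW}] bw, respectively [\textsc{ChoW}] bw, forcing both components to back up $\sigma\colorkey{i}$; a backward $\alpha\colorkey{j}$ via [\textsc{Par}] bw, [\textsc{Syn}] bw, or [\textsc{Cho}] bw forces at least one component to back up a communication action, and the IH on that component gives the contradiction. The restriction case is immediate by IH on the enclosed configuration via [\textsc{Hide}] bw, and [\textsc{Const}] bw only rewraps the target of an existing backward transition, hence does not introduce new source-label combinations.

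The main delicacy lies in the prefix case $\rho\colorkey{k}.Y$: one has to remember that [\textsc{Ract}] bw only fires when the continuation is a standard process, whereas [\textsc{Act}] bw forces the inner and outer keys to differ. These two constraints together encode a LIFO discipline on undoing, which is precisely what makes the $\sigma$-versus-$\alpha$ dichotomy propagate inward through the induction; reachability of $X$ is used only to ensure that no pathological decorated configuration has to be handled.
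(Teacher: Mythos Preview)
Your proposal is correct and follows essentially the same approach as the paper: structural induction on $X$, with the base case vacuous for standard configurations, the prefix case split according to whether the continuation is standard (only the axiom-style backward rules apply) or not (only [\textsc{Act}] bw applies, reducing to the inductive hypothesis), the timeout cases handled by observing that all applicable rules except [\textsc{STout}] bw push the label into the active branch, and the remaining constructors dispatched by the inductive hypothesis. Your write-up is in fact more explicit than the paper's, which compresses the prefix and timeout analyses into a couple of sentences each.
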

\begin{proof}
The proof is by structural induction on $X$. If $X$ is standard there is nothing to prove. If $X$ is a prefix $\rho\colorkey{k}.Y$ and $Y$ is standard then only communication actions are possible if $\rho$ is a communication action, only time actions otherwise. If $Y$ is not standard then the thesis follows by inductive hypothesis, since only rule [\textsc{Act}] is applicable. In the case of timeout, the thesis follows by noticing that at most on rule is applicable. In particular, for all the rules the thesis follows by inductive hypothesis but for rule [\textsc{STout}], for which it follows directly. The other cases follow by inductive hypothesis.  
\end{proof}

The Square Property tells that two coinitial independent transitions commute, thus closing a diamond. Formally:
%\begin{property}[Square Property SP]\label{prp:square}
\begin{restatable}[Square Property - SP]{Property}{squareproperty}
\label{prp:square}
Given a reachable $\rtpl$ configuration $\conf{T}{X}$ and 
 two coinititial transitions $t: X \redns{\pi_1\colorkey{i}} Y$ and $s: X \redns{\pi_2\colorkey{j}} Z$ with %$t \neq s$ and
$\conc{t}{s}$ there exist two cofinal transitions $t':Y \redns{\pi_2\colorkey{j}} W$ and
$s': Z \redns{\pi_1\colorkey{i}} W$.
\end{restatable}
\begin{proof}
Deferred to Appendix~\ref{sec:proofs}.
\end{proof}

Since both CCSK and TPL are sub-calculi of $\rtpl$ as discussed in \cref{sec:corr}, then the notions of conflict and independence above induce analogous notions on CCSK and TPL. To the best of our knowledge, no such notion exists for TPL. Notions of conflict and independence (dubbed concurrency) for CCSK have been recently proposed in~\cite{Aubert22}, but they rely on extended labels while we define them on standard ones.

\subsection{Causal consistency}
We can now prove causal consistency, using the theory in~\cite{LanesePU20}.
The theory in ~\cite{LanesePU20} ensures that causal consistency follows from SP, already discussed, and two other properties: BTI (Backward Transitions are Independent) and WF (Well-Foundedness).
BTI generalises the concept of backward determinism used for reversible sequential languages~\cite{Janus}. It specifies that two backward transitions from a same configuration are always independent. 

\begin{restatable}[Backward Transitions are Independent - BTI]{Property}{bti}\label{prp:bti}
Given a reachable $\rtpl$ configuration $\conf{T}{X}$, 
any two distinct coinitial backward transitions $t: X \bkns{\pi_1\colorkey{i}} Y$ and $s: X \bkns{\pi_2\colorkey{j}} Z$ are independent.
\end{restatable}

BTI property trivially holds since (as mentioned above) by looking at the definition of conflicting and independent transitions (Definition~\ref{def:ind}) there are no cases in which two backward transitions are deemed as conflicting, hence
two backward transitions are always independent.

We now show that reachable configurations have a finite past.

\begin{restatable}[Well-Foundedness - WF]{Property}{wf}\label{prp:wf}
Let  $\conf{T_0}{X_0}$ be a reachable $\rtpl$ configuration. Then
 there is no infinite sequence such that $\conf{T_{i}}{X_{i}} \bkns{\pi_{i}\colorkey{j_i}} \conf{T_{i+1}}{X_{i+1}}$
 for all $i=0,1,\ldots$.
 \end{restatable}
\begin{proof}
WF follows since each backward transition removes
a key.
Given that the number 
$|\key(X)|$ of keys in $X$ is finite, only a finite amount of backward steps can be taken.
\end{proof}

The Parabolic Lemma~\cite[Lemma 11]{rccs}, stated below, tells us that any path is causally equivalent to a path made by only backward steps, followed by only forward steps. In other words, up to causal equivalence, paths can be rearranged so as
to first reach the maximum freedom of choice, going only backwards, and then continuing only
forwards.

\begin{defi}[Parabolic Lemma (PL)~{\cite[Lemma 11]{rccs}} property]
\label{lem:pl}
An LTS satisfies the Parabolic Lemma iff for any path $\chi$, there exist two forward-only paths $\omega,\omega'$ such that
$\chi \ceq \underline{\omega}\omega'$ and $|\omega| + |\omega'| \leq |\chi|$.
\end{defi}

We can now prove our main results thanks to the proof schema of~\cite{LanesePU20}.

\begin{prop}[cf. Proposition 3.4~\cite{LanesePU20}] Suppose BTI and SP hold, then PL holds.
\end{prop}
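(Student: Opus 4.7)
The plan is to adapt the classical Parabolic Lemma argument of Danos--Krivine, formalised in Lanese--Phillips--Ulidowski, proceeding by well-founded induction on a lexicographic measure $(|\chi|, \mu(\chi))$ on paths, where $\mu(\chi)$ counts \emph{inversions}: the number of pairs $(i,j)$ with $i < j$ such that the $i$-th transition of $\chi$ is forward and the $j$-th is backward. A path is parabolic (of the form $\underline{\omega}\omega'$ with both factors forward-only) precisely when $\mu(\chi)=0$, so the base case is immediate and the length bound $|\omega|+|\omega'|\leq|\chi|$ holds with equality.

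For the inductive step, I would locate any adjacent (forward, backward) pair $st$ in $\chi$, say $s:X\to Y$ forward followed by $t:Y\to Z$ backward. If $t=\underline{s}$, then the loop axiom in the definition of $\ceq$ yields $st=s\underline{s}\ceq\void_X$, and excising the pair produces $\chi\ceq\chi'$ with $|\chi'|=|\chi|-2$, strictly decreasing the measure. If $t\neq\underline{s}$, I would invoke a reverse-diamond micro-lemma derived from BTI and SP: the backward transitions $\underline{s}:Y\to X$ and $t:Y\to Z$ are coinitial and distinct, hence independent by BTI; applying SP produces a common target $W$ via backward transitions $t':X\to W$ and $s'':Z\to W$ satisfying $\underline{s}t'\ceq ts''$. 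Precomposing with $s$ and postcomposing with $\underline{s''}$, and simplifying with $s\underline{s}\ceq\void_X$ and $s''\underline{s''}\ceq\void_Z$, gives $st\ceq t'\underline{s''}$. The pair thus becomes (backward, forward), leaving $|\chi|$ unchanged while strictly decreasing $\mu$. In both subcases the lexicographic measure strictly decreases and the inductive hypothesis applies.

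The delicate point will be the reverse-diamond micro-lemma: I must verify that the rewrite $st\ceq t'\underline{s''}$ follows purely from the SP equation and the loop axiom of $\ceq$, and that the local swap does not inadvertently create new inversions elsewhere in $\chi$; the latter is immediate because the rewrite is purely local and preserves every other transition. The length bound $|\omega|+|\omega'|\leq|\chi|$ is then preserved as an invariant across the induction, since the do-undo rewrite strictly decreases length by $2$ while the diamond swap preserves length.
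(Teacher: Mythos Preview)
The paper does not actually prove this proposition: it records it as Proposition~3.4 of the cited axiomatic framework and moves on. Your sketch is precisely the standard argument used there (and, ultimately, the Danos--Krivine argument from RCCS), so you are reconstructing the cited proof rather than offering an alternative route. The lexicographic induction on $(|\chi|,\mu(\chi))$, the case split $t=\underline{s}$ (loop cancellation, length drops by~$2$) versus $t\neq\underline{s}$ (BTI + SP swap, $\mu$ drops by~$1$), and the algebraic derivation $st\ceq t'\,\underline{s''}$ from the diamond equation plus the loop axioms are all correct and standard.

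One point deserves explicit care. When you apply SP to the coinitial backward transitions $\underline{s}$ and $t$, you assert that the closing edges $t'$ and $s''$ are themselves \emph{backward}, so that $t'\,\underline{s''}$ is a (backward, forward) pair and $\mu$ strictly decreases. The SP formulation in this paper is stated over the combined forward--backward relation and does not, on its face, constrain the direction of the closing edges. In the cited general framework this is part of the axiomatic package (squares close ``in kind''); in the concrete $\rtpl$ setting it also follows, because a transition carrying key~$j$ can only be backward from a configuration that already contains~$j$, and the key bookkeeping forces this here. If you intend the argument to stand alone rather than as a pointer to the external reference, that step should be made explicit.
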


\begin{prop}[cf. Proposition 3.6~\cite{LanesePU20}] Suppose WF and PL hold, then CC holds.
\end{prop}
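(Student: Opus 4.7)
The plan is to apply the abstract result \cite{LanesePU20}[Proposition~3.6] in a black-box fashion. That proposition is stated at the level of a generic LTS equipped with forward and backward transitions, and shows that WF together with PL are sufficient to conclude CC; since both hypotheses have just been established for $\rtpl$ (PL via the preceding proposition, which in turn rests on SP of \cref{prp:square} and BTI of \cref{prp:bti}, and WF via \cref{prp:wf}), the conclusion transports immediately. All the $\rtpl$-specific work has been discharged by verifying SP, BTI and WF; no further calculus-specific argument is needed at this stage.

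If one wanted to unfold the abstract argument rather than cite it, the proof would proceed as follows. Given coinitial cofinal paths $\chi,\omega$, first invoke PL to rewrite both in backward-then-forward normal form, $\chi \ceq \underline{\chi_b}\chi_f$ and $\omega \ceq \underline{\omega_b}\omega_f$ with $\chi_b,\chi_f,\omega_b,\omega_f$ forward-only. Then use WF to set up a well-founded induction on the depth of the past (the backward parts): one repeatedly reduces the problem to shorter paths, inserting undo--redo pairs $t\underline{t}\ceq\void$ where needed and applying diamond closures from SP. The induction terminates precisely because WF forbids infinite backward descent; without WF one could construct pathological examples where coinitial cofinal paths differ in their reversible content.

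The hard part of such an unfolded proof is the core lemma that two coinitial cofinal \emph{forward-only} paths are causally equivalent. One proceeds by induction on the sum of the lengths: if the first transitions coincide the result follows by induction on the suffix; if they differ one must argue that they are independent in the sense of \cref{def:ind} (they cannot be conflicting, since conflicting forward transitions lead to states that can never rejoin up to causal equivalence) and then use SP to close a diamond, after which induction applies. This is exactly the delicate combinatorial argument that has been factored out in the abstract framework of \cite{LanesePU20}, so citing Proposition~3.6 there is both cleaner and spares us from reproducing a well-understood schema.
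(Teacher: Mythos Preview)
Your proposal is correct and matches the paper's approach exactly: the paper does not give a proof of this proposition at all, but simply imports it verbatim from \cite{LanesePU20} (as the ``cf.\ Proposition~3.6'' tag indicates), relying on the fact that WF and PL have been established for $\rtpl$ in the preceding results. Your additional sketch of how the unfolded argument would go is accurate but unnecessary here, since the paper treats the result as a black box from the general theory.
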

As a corollary of PL, reachable states are reachable via forward-only paths (cf.~\cite{LanesePU20}).
\begin{cor}\label{cor:fwreach}
A configuration $\conf{T}{X}$ is reachable 
iff there exists a process $\conf{0}{P}$ and a forward-only path $\conf{0}{P} \fwd{}^* \conf{T}{X}$.
\end{cor}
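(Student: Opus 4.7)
The plan is to deduce the corollary almost entirely from the Parabolic Lemma (PL), which has just been established. The reverse implication (forward-only reachable implies reachable) is immediate, since a forward-only path is a special case of a mixed path using $\red{} = \fwd{} \cup \bk{}$. So the work is in the forward direction.

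Suppose $X$ is reachable, so there exist a standard configuration $P$ and a path $\chi : P \red{}^* X$. By PL, there exist forward-only paths $\omega$ and $\omega'$ with $\chi \ceq \underline{\omega}\omega'$. Because causal equivalence preserves source and target, $\underline{\omega}\omega'$ is itself a path from $P$ to $X$, and its initial segment $\underline{\omega}$ consists entirely of backward transitions starting from $P$.

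Now I would observe that no backward transition is enabled at a standard configuration. Indeed, by Definition~\ref{def:key}, $\std(P)$ means $\key(P) = \emptyset$, and a straightforward inspection of the backward rules in Figure~\ref{fig:bk} shows that every backward transition $X \bk{\pi\colorkey{i}} X'$ requires $i \in \key(X)$ (each rule either removes a key from a prefix, a timeout marker, or a subconfiguration, and the base cases all explicitly rely on the presence of a keyed prefix such as $\rho\colorkey{i}.Y$ or $\ghost{i}.Y$). Hence $\underline{\omega}$ has length zero, which in turn forces $\omega$ to be empty. We conclude $\chi \ceq \omega'$, and $\omega'$ is a forward-only path from $P$ to $X$, yielding the claim.

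The only subtle point, and the one I would be careful to state explicitly, is the observation that standard configurations admit no backward transitions; once that is in hand, PL does all the heavy lifting and no further combinatorial argument on paths is required.
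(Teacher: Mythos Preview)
Your proof is correct and follows exactly the paper's approach: invoke PL to obtain $\underline{\omega}\omega'$, then observe that $P$ is standard and hence admits no backward transitions, forcing $\omega$ to be empty. The paper states this in one line; your version simply spells out in more detail why a standard configuration has no backward transitions.
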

\begin{proof}
From PL, by noticing that the backward path is empty since $P$ cannot take backward actions.
\end{proof}
The general theory above can help us in proving specific properties of $\rtpl$, as we show below.

We have considered in this paper a global notion of time, as shown by the following theorem.
\begin{thm}\label{th:total}
For each reachable $\rtpl$ configuration $X$, the restriction of $\leq_{X}$ to keys attached to time actions is a total order. 
\end{thm}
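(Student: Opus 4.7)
My plan is to invoke Corollary~\ref{cor:fwreach} in order to reduce to the case of a forward-only path and then to induct on its length. Writing $P \fwd{}^{n} X$ for a forward derivation of length $n$ from a standard process $P$, the base case $n = 0$ is immediate since $\key(P) = \emptyset$. For the inductive step, consider the last transition $X' \fwd{\pi\colorkey{i}} X$. If $\pi$ is a communication action, then $i$ decorates a communication prefix (or converts $\timed{P_1}{P_2}$ into $\timedl{P_1'}{P_2}{i}$, whose top key is a communication key), so the sets of time keys of $X$ and of $X'$ coincide. A straightforward induction on the derivation of the communication transition shows $\ord(X') \subseteq \ord(X)$, so $\leq_X$ extends $\leq_{X'}$ on $\key(X')$ and the total order on time keys transfers from $X'$ to $X$ via the inductive hypothesis.

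The substantive case is $\pi = \sigma$, where $i$ is a new time key. By the inductive hypothesis the time keys of $X'$ are already totally ordered by $\leq_{X'}$, so it suffices to prove the auxiliary claim that $j <_X i$ for every time key $j \in \key(X')$: this makes $i$ a top element and yields the total order on the time keys of $X$. I would prove this claim by a secondary induction on the derivation of $X' \fwd{\sigma\colorkey{i}} X$. The atomic rules [\textsc{RAct}] (for $\sigma$), [\textsc{PAct}], [\textsc{Idle}], and [\textsc{STout}] act on standard subterms, so the claim holds vacuously. For [\textsc{Act}] on $\rho\colorkey{k}.Y$ with $Y \fwd{\sigma\colorkey{i}} Y'$: since $i \in \key(Y')$, the definition of $\ord$ gives $k <_X i$ directly, which discharges the case $\rho \in \{\sigma,\ghostlab\}$ (where $k$ is a time key), while deeper time keys in $Y$ are handled by the secondary inductive hypothesis. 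The rules [\textsc{Wait}], [\textsc{SWait}], [\textsc{Hide}], and [\textsc{Const}] go through analogously, and [\textsc{SynW}] and [\textsc{ChoW}] follow by combining the secondary inductive hypothesis on both premises.

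The main obstacle is controlling where the new key $i$ lands relative to pre-existing time keys spread across parallel and choice structures. The decisive structural fact I would rely on is the globality of $\sigma$ transitions, enforced by [\textsc{SynW}] and [\textsc{ChoW}]: both premises synchronise on the \emph{same} key $i$, so each branch independently contributes the inequalities $j <_X i$ for its own time keys $j$. The union of the two locally-total chains obtained from the inductive hypothesis, each now dominated by $i$, is therefore again a chain, yielding the global total order on the time keys of $X$.
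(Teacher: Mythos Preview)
Your proof is correct and follows essentially the same approach as the paper: both invoke Corollary~\ref{cor:fwreach} to reduce to a forward-only path and then argue by induction on the derivation of the $\sigma$-transition, with the globality of time steps enforced by [\textsc{SynW}] and [\textsc{ChoW}] being the decisive structural ingredient. Your organisation via an outer induction on path length (showing each new time key becomes a top element) is slightly more explicit than the paper's version (which picks two arbitrary time keys and shows the earlier-introduced one lies below the later one), and the one point you leave implicit---that $\ord(X') \subseteq \ord(X)$ also holds for $\sigma$-steps, so the total order on old time keys transfers from $\leq_{X'}$ to $\leq_X$---is handled by exactly the same argument you sketch for communication steps.
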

\begin{proof}
From Corollary~\ref{cor:fwreach} we have that there exist a
process $P$ and a forward-only path $P \fwd{}^* X$. Take two arbitrary keys
$\ckey{i}$ and $\ckey{j}$ attached to time actions. Let $\ckey{i}$ be the first one to occur
in $P \fwd{}^* X$, and $X_1 \fwd{\sigma\colorkey{i}} X_2$ the transition introducing it (note that each step introduces a key). Since the path is forward, key $\ckey{j}$ will be attached to some configuration which is standard in $X_2$ (or to a $\ghostlab$ just before a standard configuration). We show by induction on the derivation of $X_1 \fwd{\sigma\colorkey{i}} X_2$ that this implies $\ckey{i} < \ckey{j}$. The thesis will follow. We have a case analysis on the last applied rule.

The cases of rules [\textsc{PAct}], [\textsc{RAct}] and [\textsc{Idle}] follow from the definition of $\ord$ on prefix. The cases of rules [\textsc{Act}], [\textsc{Hide}] and [\textsc{Const}] follow by induction. The cases for timeout are similar, noticing that $\ckey{j}$ could only be attached to the selected configuration. For parallel composition, only rule [\textsc{SynW}] needs to be considered, and the thesis follows by inductive hypothesis. Similarly, for choice only rule [\textsc{ChoW}] needs to be considered, and the thesis follows by induction as well. 
\end{proof}
As shown above, time actions are never independent, and only communication actions can be. Also, since time actions do not commute with communication actions (cf.~clause \ref{confl:alfasigma} in Definition~\ref{def:confl}) then each communication action is bound to be executed between two fixed time actions.

One may wonder whether the global notion of time described above is
too strict. This is a very good question, and indeed we plan in future
work to investigate different notions of causality for TPL, which will
induce a different causal-consistent reversible extension.

We show here just that dropping the $\ghost{i}$ prefix, which ensures time actions are recorded also by untimed configurations, would not solve the issue.
We have pursued this approach in~\cite{BocchiLMY22}, but it leads to violations of the Loop Lemma and the Parabolic Lemma, two main properties in the causal-consistency theory, as shown by the following example.

\begin{example}\label{ex:nocc}
Let us consider the configuration 
$X = \sigma.a.\nil \parallel b. \sigma.\nil$ and the following execution:
\begin{align*}
X \fwd{\sigma\colorkey{i}}\sigma\colorkey{i}.a.\nil \parallel b. \sigma.\nil \fwd{b\colorkey{j}}
\sigma\colorkey{i}.a.\nil \parallel b\colorkey{j}. \sigma.\nil\fwd{\sigma\colorkey{k}}
\sigma\colorkey{i}.a.\nil \parallel b\colorkey{j}. \sigma\colorkey{k}.\nil = Z
\end{align*}
Now from $Z$ we can undo the time actions $\sigma\colorkey{i}$ and $\sigma\colorkey{k}$
as follows:
$$Z \bk{\sigma\colorkey{i}} \sigma.a.\nil \parallel b\colorkey{j}. \sigma\colorkey{k}.\nil = Z_1 \bk{\sigma\colorkey{k}} \sigma.a.\nil \parallel b\colorkey{j}. \sigma.\nil =Z_2$$
Now let us focus on the last transition. According to the Loop Lemma (Lemma~\ref{lem:loop}) we can reach $Z_1$ from $Z_2$ by doing a forward time action, that is
 $Z_2\fwd{\sigma\colorkey{k}}Z_1$, but this is impossible as
  $$\sigma.a.\nil \parallel b\colorkey{j}. \sigma.\nil \fwd{\sigma\colorkey{k}} \sigma\colorkey{k}.a.\nil \parallel b\colorkey{j}. \sigma\colorkey{k}.\nil \neq Z_1$$
 Also, the Parabolic Lemma fails. Indeed if we consider the path which leads to $Z_1$, according to 
 the Parabolic Lemma, we can rewrite this path as a sequence of backward transitions followed by forward ones.  If from $Z_1$ we undo all the actions and try to reach it by using just forward actions we fail since:
 \begin{align*}
 Z_1= \,& \sigma.a.\nil \parallel b\colorkey{j}. \sigma\colorkey{k}.\nil \bk{\sigma\colorkey{k}}
 \bk{b\colorkey{j}} \sigma.a.\nil \parallel b.\nil\\
 & \fwd{b\colorkey{j}} \sigma.a.\nil \parallel b\colorkey{j}. \sigma.\nil \fwd{\sigma\colorkey{k}} \sigma\colorkey{k}.a.\nil \parallel b\colorkey{j}. \sigma\colorkey{k}.\nil \neq Z_1
 \end{align*}
 
By using $\ghost{i}$ prefixes we impose a total order among time actions, as shown in Theorem~\ref{th:total}, as follows:
\begin{align*}
X \fwd{\sigma\colorkey{i}}&\sigma\colorkey{i}.a.\nil \parallel \ghost{i}.b. \sigma.\nil \fwd{b\colorkey{j}}
\sigma\colorkey{i}.a.\nil \parallel \ghost{i}.b\colorkey{j}. \sigma.\nil\fwd{\sigma\colorkey{k}} \\
&\sigma\colorkey{i}.\ghost{k}.a.\nil \parallel b\colorkey{j}. \sigma\colorkey{k}.\nil = Y
\end{align*}
Now from $Y$ we cannot undo the time action $\sigma\colorkey{i}$, since now we need to undo action $\sigma\colorkey{k}$ first. With this machinery in place, we enforce a strict notion of causality in TPL, but we have been able to successfully build a causal-consistent reversible extension.\finex
\end{example}

%
%\begin{example} Dining Philosophers
%Given the following process definitions:
%\begin{align*}
%A_2 = & \, \co{get}.\timed{\co{get}.\co{eat}}{\co{put}}\\
%A_3 = &\, \co{get}.(\co{get}.\co{eat} + \timed{\nil}{\co{put}} )\\
%B = & \,get.put.B\\
%C = & \, eat\\
%\mathtt{Sys} = & \, A_2 \parallel A_3 \parallel B\parallel B \parallel C
%\end{align*}
%A possible execution of \texttt{Sys} is
%\begin{align*}
%\mathtt{Sys}& \fwd{\tau[1]} A_2 \parallel \co{get}[1].(\co{get}.\co{eat} + \timed{\nil}{\co{put}} ) \parallel get[1].put.B \parallel B \parallel C \\
%&\fwd{\tau[2]} 
%\co{get}[2].\timed{\co{get}.\co{eat}}{\co{put}}\parallel \co{get}[1].(\co{get}.\co{eat} + \timed{\nil}{\co{put}} )  \parallel get[1].put.B \parallel  get[2].put.B \parallel C\\
%& \fwd{\sigma[3]}
%\co{get}[2].\timedl{\co{get}.\co{eat}}{\co{put}}{3}\parallel \co{get}[1].(\co{get}.\co{eat} + \timedl{\nil}{\co{put}}{3} )  \parallel get[1].put.B  \parallel get[2].put.B \parallel C\\
%& \fwd{\tau[4]}
%\co{get}[2].\timedl{\co{get}.\co{eat}}{\co{put}}{3}\parallel \co{get}[1].(\co{get}.\co{eat} + \timedl{\nil}{\co{put}[4]}{3} )  \parallel get[1].put[4].B  \parallel get[2].put.B \parallel C\\
%& \fwd{\tau[5]}
%\co{get}[2].\timedl{\co{get}.\co{eat}}{\co{put}[5]}{3}\parallel \co{get}[1].(\co{get}.\co{eat} + \timedl{\nil}{\co{put}[4]}{3} )  \parallel get[1].put[4].B  \parallel get[2].put[5].B \parallel C\\
%\end{align*}
%\end{example}

\section{Conclusion, Related and Future Work}\label{sec:conc}
% !TEX root = main.tex

The main contribution of this paper is the study of the interplay between causal-consistent reversibility and time. 
A reversible semantics for TPL cannot be automatically derived using well-established frameworks~\cite{ccsk,LaneseM20}, since some operator acts differently depending on whether the label is a communication or a time action. For example, in TPL a choice cannot be decided by the passage of time, making the $+$ operator both static and dynamic, and the approach in~\cite{ccsk} not applicable. To faithfully capture patient actions in a reversible semantics we introduced $\ghostlab$ prefixes.
Another peculiarity of TPL is the timeout operator $\timed{P}{Q}$, which can be seen as a choice operator whose left branch has priority over the right one. 
Indeed, if $P$ can do a $\tau$ action then $Q$ can not execute and it is discarded. 
Although we have been able to use the static approach to reversibility~\cite{ccsk}, adapting it to our setting has been challenging for the aforementioned reasons.
Notably, our results have a double interpretation: as an extension of CCSK~\cite{ccsk} with time, and as a reversible extension of TPL~\cite{tpl}. As a side result, by focusing on the two fragments, we derive notions of independence and conflict for CCSK and TPL.

\paragraph{Other process algebras}
As said by Baeten and Bergstra
\emph{``Adding real time features can be done in many ways and it is impossible to explore all options in a single paper''}~\cite{2}. The literature of timed process calculi is indeed rich. Thus, we only give an outline of the main approaches with the purpose of reflecting on the applicability of our results to different time approaches.
Besides TPL~\cite{tpl}, considered in this paper, a non-exhaustive list of alternative formalisms includes timed CSP~\cite{37}, temporal CCS~\cite{30}, timed CCS~\cite{Yi91}, real-time ACP~\cite{2}, urgent LOTOS~\cite{9}, CIPA~\cite{1}, ATP~\cite{31}, TIC~\cite{36}, PAFAS~\cite{pafas}, and mCRL2~\cite{mcrl}. 

To simplify the discussion, we build on the categorisation in~\cite{BM2023a} and focus our comparison on the following time-related design choices:
\begin{description}
\item[Separated vs integrated semantics] In the first case, actions are
instantaneous and time only passes in between actions; hence, functional
behaviour and time are orthogonal. In the second case, every action takes a
certain amount of time to be performed and time passes only due to action
execution; hence, functional behaviour and time are integrated.
\item[Relative time vs absolute delays] In the first case, each
delay refers to the time instant of the previous observation. In the
second case, all delays refer to the starting time of the system's execution.
\item[Global clock vs local clocks] In the first case, a single clock governs the pace of time
passing in the system. In the second case, several clocks associated with the various system
parts may have different views of the pace of time. If a model allows processes to have local clocks but time flows at the same pace for all of them (even if they hold different values due to resets, as in the case e.g. of Timed Automata~\cite{TA}) we still classify the model as a global clock model. 
\item[Eager vs lazy vs maximal progress] There are several interpretations of when a communication action can be executed or delayed. Eager semantics enforce actions to be performed as soon as they become enabled, i.e.,
without any delay, thereby implying that their execution is urgent. On the other hand, laziness allows the execution of an action to be delayed even if the action is enabled. Maximal progress is eager for internal actions and lazy otherwise: actions can be delayed only if they are waiting to synchronise with some external partner which is not yet available.
Some calculi have primitives for both eager and lazy actions, so each action can be either lazy or eager. 
\end{description}
Table~\ref{tablesurvey} illustrates how the aforementioned timed calculi position with respect to the four criteria above. 
Most of the formalisms we have reviewed combine \textbf{separated semantics}, \textbf{relative time}, and \textbf{global clock}. The main difference is the urgency (or lack thereof) of communication actions with respect to time actions. 
ATP~\cite{31}, temporal CCS~\cite{30}, and PAFAS~\cite{pafas} allow actions to happen at any time within the prescribed intervals (e.g., later than when they become ready to execute). Instead, timed CSP~\cite{37} and timed CCS~\cite{Yi91} share the same approach we adopted in this paper, inherited from TPL~\cite{tpl}: actions are normally lazy, unless they are silent in which case they are eager (maximal progress). A more general approach is the one of urgent LOTOS~\cite{9}, which provides primitives for urgent actions and primitives for non-urgent actions, hence enabling one to decide the semantics of each specific action. 
The remaining formalisms have \textbf{integrated semantics} combined with \textbf{absolute time}. 
In mCRL~\cite{mcrl}, CIPA~\cite{36}, and TIC~\cite{36}, the transition relation models both execution of actions and time elapsing (integrated semantics).\footnote{The transition relation of mCRL does also feature an idling relation, but this does not lead to any follow-up state and is just for final states.} In all the three calculi, time is specified from the beginning of the computation (absolute time). While mCRL relies on a global clock, CIPA and TIC allow parallel processes to go `out of sync' (local clocks). In CIPA, global time synchronisation is only required for causally dependent actions (it has to be re-established before two processes can communicate with each other). TIC uses an `age' function to record discrepancies between the time of parallel processes. mCRL has no silent actions, and time idling and communication actions can happen at any time, after they become ready (lazy). In CIPA, the timing of an action needs to exactly match its prescription so the action happens as soon as it is ready (eager). TIC allows delays of exact amounts of time (urgent/eager) as well as delays of times within an interval (lazy).

The application of our approach using integrated semantics and/or absolute time should not present any particular challenge. In fact, separated and integrated semantics have been shown to be equivalent~\cite{BernardoCT16} (i.e., they can be encoded into each other preserving weak barbed bisimilarity). Similarly for absolute instead of relative time thanks to the equivalence given in~\cite{Corradini00}. 

Extending our framework to local clocks (e.g., as in CIPA) would be interesting but non-trivial in our integrated semantics. It may require us to record some live information on the different time perspective of parallel processes to rule out unwanted interleavings. An alternative could be to exploit the encoding of \cite{BernardoCT16}
from TCCS to CIPA, and see whether the semantics is still preserved while considering reversible behaviours.

Building from the conference version of this article~\cite{BocchiLMY22}, the work in~\cite{BM2023a} has shown that our approach would apply also to a semantics with only eager actions and to a semantics with only lazy actions. However, the applicability of our approach to a scenario where each action can be statically set to be either lazy or eager (the `either' option in the `action' column of Table~\ref{tablesurvey}) needs to be further investigated.

\begin{table}[]
\begin{tabular}{l|l|l|l|l}
 &\textbf{semantics} & \textbf{time} & \textbf{clocks} & \textbf{actions} \\ 
 \hline \hline &&&&\\[-3mm]
 ATP~\cite{31} &  separated & relative  & global & lazy \\
 temporal CCS~\cite{30} &  separated & relative  & global & lazy \\
 PAFAS~\cite{pafas}& separated & relative & global & lazy \\
 TPL~\cite{tpl} & separated & relative & global &  maximal progress\\
 timed CSP~\cite{37} & separated & relative & global &  maximal progress  \\
timed CCS~\cite{Yi91} & separated & relative & global &  maximal progress \\
urgent LOTOS~\cite{9} & separated & relative & global & either\\
 \hline
 mCRL2~\cite{mcrl} & integrated & absolute & global & lazy \\
 CIPA~\cite{1} & integrated & absolute & local & eager\\
 TIC~\cite{36}& integrated & absolute & local & either\\
 \end{tabular}
\caption{Semantics can be separated or integrated; time can be relative or absolute; clocks can be global or local; actions can be eager, lazy, either of them, or maximal progress.}
\label{tablesurvey}
\end{table}

\paragraph{Alternative timed formalisms}
Timed Petri nets are a relevant tool for analysing real-time systems. A step towards the analysis of real-time systems would be to encode $\rtpl$
into timed Petri nets~\cite{ZimmermannFH01} extended with reversibility, by building on the encoding of reversible CCS into reversible Petri nets~\cite{MelgrattiMP21}. Also, we could think of encoding $\rtpl$ in timed automata~\cite{AlurCD93} extended with reversibility. 
Another possibility would be to study the extension of a monitored timed semantics for multiparty session types, as the one of~\cite{NeykovaBY17}, with reversibility~\cite{lmcs:8540}.

Maximal progress of TPL (as well as $\rtpl$) has connections with Markov chains~\cite{BrinksmaH00}. For instance,
in a stochastic process algebra, the process 
$$\tau.P + (\lambda).Q $$
 (where $\lambda$ is a rate) will not be delayed since  $\tau$ is
instantaneously enabled. This is similar to maximal progress for the timeout operator.
A deep comparison between deterministic time, used by TPL, and stochastic time, used by stochastic process algebras, can be found in~\cite{BernardoCT16}. Further investigation on the relation between our work and~\cite{BMezzina20}, studying reversibility in Markov chains, is left for future work. 
The treatment of passage of time shares some similarities with broadcast~\cite{Mezzina18} as well: time actions affect parallel components in the same way.

\paragraph{Future directions}
We have just started our research quest towards a reversible timed semantics. Beyond considering local notions of time, as discussed after Theorem~\ref{th:total}, a first improvement would be to add an explicit rollback operator, as in~\cite{LaneseMSS11}, that could be triggered, e.g., in reaction to a timeout. Also, asynchronous communications (like in Erlang) could be taken into account. TPL is a conservative timed extension of CCS. Due to its simplicity, it has a very clear behavioural theory~\cite{tpl}, including an axiomatization.  A further step could be to adapt such behavioural theory to account for reversibility, by combining it with the one for CCSK developed in~\cite{LaneseP21}. However, the fact that reversibility breaks Milner's expansion law may limit the power of the axiomatisation. Also, we could consider studying more complex temporal operators~\cite{NicollinS91}. In TPL time is discrete, and the language abstracts away from how time is represented. Indeed, the idling prefix $\sigma$ is meant to await one cycle of clock. A more fine-grained treatment of time in CCS was proposed in Timed CCS (TCCS)~\cite{Yi90,Yi91}. In TCCS it is possible to express a process, say P, which awaits 3 time units directly by: 
\[\epsilon(3).P\]
 Now  the process above, in principle, can be rendered in TPL as the process $\sigma.\sigma.\sigma.P$ by assuming that a cycle of clock lasts one time unit. But this is only possible if we consider TCCS with discrete time. Even if we restrict ourselves to discrete time, encoding the $\epsilon(\cdot)$ operator in TPL would be troublesome to treat (from a reversible point of view) as a single step has to be matched by several ones. Also, TCCS obeys to \emph{time additivity} (two actions taking times $t_1$ and $t_2$ can be turned into a single action taking time $t_1+t_2$), while TPL does not. 
 As shown in \cite{BM2023a}, time additivity poses a problem with our approach: in presence of time additivity, the proof schema proposed in ~\cite{LanesePU20} does not hold anymore. In particular,
 because of time additivity BTI does not hold anymore and Loop Lemma has to be 
 formalised in a weaker form. Hence, one has to redo all the proofs. 
 For all these reasons, it will not be straightforward to adapt the approach in this paper to deal with TCCS.

\paragraph{Prospective applications} As discussed above, this work is a first step towards an analysis of
reversible real-time systems and it has the purpose of clarifying the
relationship between reversibility and time.  Although the
contribution of this work is theoretical, we envisage a potential
application to debugging of real-time Erlang code.  More concretely, we would
like to extend CauDEr~\cite{LaneseNPV18,GV21-CauREr,CauDEr}, the only
causal-consistent reversible debugger for a (fragment of a) real
programming language we are aware of. The purpose of the extension would be to support \emph{timed} Erlang
programs. To this end we would first need to extend the reversible
semantics of Erlang in~\cite{LamiLSCF22,FabbrettiLS21} with a notion
of time, imported from the present work, so to support constructs such as `\lstinline{after}'
and `\lstinline{sleep}', as used, e.g., in our
Example~\ref{ex:1}. The `\lstinline|after|' (i.e., timeout) construct, in particular, is
very common in the Erlang programming practice.
Even if Erlang timeouts are close to TPL ones, there are a
number of challenges to be faced. First, Erlang communication is
asynchronous, unlike $\rtpl$. Second, and more importantly, Erlang
delays can be explicit in the code, as in our Example~\ref{ex:1}, but
they can also be generated by network delays or long computations.
Therefore, in order to enable reversible debugging
of timed programs, one needs to pair the code with a model, possibly computed in an automated way, that
describes the delays that are likely to occur in
the system of interest.
The development of this
prospective application goes beyond the scope of the formal setting
given in the current work.

Another possible application is to bring our theory to timed Rebecca \cite{timedRebecca} which is a timed actor based language with model checking support. This would enable us to exploit model checking for reversible behaviours.

\bibliographystyle{alphaurl}

\bibliography{biblio}
\FloatBarrier
\appendix
\section{Background: CCS, CCSK and TPL}\label{sec:bg}
% !TEX root = main.tex

In this section we present the full syntax and semantics of CCS, TPL and CCSK, taken respectively from
~\cite{ccs},~\cite{tpl} and~\cite{ccsk}.
%The experienced reader can skip this section and go directly to the next one.

\subsection{CCS: Calculus of Communicating Systems}\label{app:ccs}
\begin{figure}[t]
\begin{align*}
\textrm{(Processes)}\quad & P = \,   \alpha.P  \sep  P + Q \sep P \parallel Q \sep \res{P}{a} \sep A \sep \nil & 
\end{align*}
\caption{Syntax of CCS}
\label{fig:ccssyn}
\end{figure}
\begin{figure}[t]
\begin{mathpar}
\inferrule*[left=(Act)]{}{\alpha.P \tplfwd{\alpha}P} \and
\inferrule*[left=(Sum$_1$)]{P\tplfwd{\alpha} P'}{P+Q \tplfwd{\alpha} P' + Q} \and
\inferrule*[left=(Sum$_2$)]{Q\tplfwd{\alpha} Q'}{P+Q \tplfwd{\alpha} P + Q'} \and
%\inferrule*[left=(Then)]{P\tplfwd{\alpha} P'}{\timed{P}{Q}\tplfwd{\alpha} P'} \and
\inferrule*[left=(Com$_1$)]{P\tplfwd{\alpha} P'}{P\parallel Q \tplfwd{\alpha} P' \parallel Q} \and
\inferrule*[left=(Com$_2$)]{Q\tplfwd{\alpha} Q'}{P\parallel Q\tplfwd{\alpha} P \parallel Q'} \and
\inferrule*[left=(Com$_3$)]{P\tplfwd{\alpha} P' \and Q\tplfwd{\co{\alpha}} Q'}{P\parallel Q\tplfwd{\tau} P' \parallel Q'} \\
\inferrule*[left=(Res)]{P\tplfwd{\alpha} P' \and \alpha\not\in \{a,\co{a}\}}{\res{P}{a}\tplfwd{\tau} \res{P'}{a}} \and
\inferrule*[left=(Rec)]{A\stackrel{def}{=}P \and P\tplfwd{\alpha} P' }{A\tplfwd{\alpha} P'} \end{mathpar}
\caption{Semantics of CCS}
\label{fig:ccssem}
\end{figure}
The Calculus of Communicating Systems is a process calculus introduced by Milner~\cite{ccs}.
We let $\names$ be the set of action names $a$, $\co{\names}$ the set of action conames $\co{a}$. We use $\alpha$ to range over $a$, $\co{a}$ and internal actions $\tau$. We assume $\co{\co{a}}=a$. 
We let $\mathcal{A}^{\tau}= \names \cup \co{\names} \cup \{\tau\}$.
The syntax of CCS is reported in Figure~\ref{fig:ccssyn}. A process can be an action \emph{prefix}
$\alpha.P$, that can perform an action $\alpha$ and continue as $P$, a non-deterministic \emph{choice}
$P+Q$ among two processes, a \emph{parallel} composition $P \parallel Q$ of two processes, the \emph{restriction}
$\res{P}{a}$, which acts as the process $P$, but actions on $a$ are forbidden, the constant \emph{identifier} $A$
and the \emph{inactive} process $\nil$. The semantics of CCS is given by the labelled transition system (LTS)
$(\mathcal{P}, \mathcal{A}^{\tau}, \tplfwd{})$, where $\mathcal{P}$ is the set of all CCS processes,
$\mathcal{A}^{\tau}$ is the set of labels and $ \tplfwd{}$ is the least transition relation induced by the rules in Figure~\ref{fig:ccssem}.

\subsection{CCSK: CCS with communication keys}\label{app:ccsk}
CCS with communication keys (CCSK) is a reversible extension of CCS obtained by applying to CCS the approach in~\cite{ccsk}.
The key idea of this approach is to make all the dynamic operators (such as prefix and non-deterministic choice) static
and to decorate prefixes with freshly created identifiers, dubbed communication keys, when they are executed.
The syntax of CCSK is reported in Figure~\ref{fig:ccsksyn}, where the addition with respect to the syntax of CCS (Figure~\ref{fig:ccssyn}) is enclosed in grey boxes. With respect to CCS, in CCSK a prefix $\alpha$ can be decorated with a communication key $\magenta{i}$, to indicate the fact that the prefix has been executed. As one can see,
CCSK (reversible) configurations are built on top of CCS processes.

\begin{figure}[t]
\begin{align*}
%& \alpha \in \names ~\text{(action names)} \qquad 
% \pi \in \names \cup \{\sigma\}~\text{(prefix names)} \qquad
% k,i,j \in \keys ~\text{(action keys)} \\
\textrm{(Processes)}\quad & P = \,   \alpha.P  \sep  P + Q \sep P \parallel Q \sep \res{P}{a} \sep A \sep \nil &   \\[0.2cm]
\textrm{(Configurations)}\quad  &  X = \colorbox{shade}{$\alpha\colorkey{i}.X$} \mid X + Y   \sep X \parallel Y \sep X\setminus a \sep P 
\end{align*}
\caption{Syntax of CCSK}
\label{fig:ccsksyn}
\end{figure}

\begin{figure}[t]
\begin{mathpar}
\inferrule*[left=(Act1)]{\std(X)}{\alpha.X \ccskfwd{\alpha\colorkey{i}} \alpha\colorkey{i}.X} \and
\inferrule*[left=(Act2)]{X\ccskfwd{\beta\colorkey{j}}X' \\ i\neq j}{\alpha\colorkey{i}.X \ccskfwd{\beta\colorkey{j}}\alpha\colorkey{i}.X'}\\
\inferrule*[left=(Sum1)]{X\ccskfwd{\alpha\colorkey{i}} X' \and \std(Y)}{X+Y \ccskfwd{\alpha\colorkey{i}} X' + Y} \and
\inferrule*[left=(Sum2)]{Y\ccskfwd{\alpha\colorkey{i}} Y' \and \std(Y)}{X+Y \ccskfwd{\alpha\colorkey{i}} X + Y'} \\
\inferrule*[left=(Par1)]{X\ccskfwd{\alpha\colorkey{i}} X' \and i\not \in \key(Y)}{X\parallel Y \ccskfwd{\colorkey{i}} X' \parallel Y} \and
\inferrule*[left=(Par2)]{Y\ccskfwd{\alpha\colorkey{i}} Y' \and i\not \in\key(X)}{X\parallel Y \ccskfwd{\alpha\colorkey{i}} X \parallel Y'}\and
\inferrule*[left=(Par3)]{X\ccskfwd{\alpha\colorkey{i}} X' \and Y\ccskfwd{\co{\alpha}\colorkey{i}} Y' }{X\parallel Y \ccskfwd{\tau\colorkey{i}} X \parallel Y'} \\
\inferrule*[left=(Res)]{X \ccskfwd{\alpha\colorkey{i}} X' \\ \alpha\not\in \{a,\co{a}\}}
{\res{X}{a}\ccskfwd{\alpha\colorkey{i}}\res{X'}{a}} \and
\inferrule*[left=(Rec)]{A\stackrel{def}{=}P \and P\ccskfwd{\alpha\colorkey{i}}X} {A\ccskfwd{\alpha\colorkey{i}}X}
\end{mathpar}
\caption{Forward LTS of CCSK}
\label{fig:ccskfwlts}
\end{figure}

\begin{figure}[t]
\begin{mathpar}
\inferrule*[left=(Act1)]{\std(X)}{\alpha\colorkey{i}.X \ccskbk{\alpha\colorkey{i}}a.X} \and
\inferrule*[left=(Act2)]{X \ccskbk{\alpha\colorkey{i}}X' \and i\neq j}{\alpha\colorkey{i}.X \ccskbk{\alpha\colorkey{i}}a.X} \\
\inferrule*[left=(Sum1)]{X\ccskbk{\alpha\colorkey{i}}X' \and \std(Y)}{X+Y \ccskbk{\alpha\colorkey{i}} X'+Y} \and
\inferrule*[left=(Sum2)]{Y\ccskbk{\alpha\colorkey{i}}Y' \and \std(X)}{X+Y \ccskbk{\alpha\colorkey{i}} X+Y'}\\
\inferrule*[left=(Par1)]{X\ccskbk{\alpha\colorkey{i}}X' \and i\not\in \key(Y)}{X\parallel Y \ccskbk{\alpha\colorkey{i}} X' \parallel Y} \and
\inferrule*[left=(Par2)]{Y\ccskbk{\alpha\colorkey{i}}Y' \and i\not\in \key(X)}{X\parallel Y \ccskbk{\alpha\colorkey{i}} X \parallel Y'} \\
\inferrule*[left=(Par3)]{X\ccskbk{\alpha\colorkey{i}} X' \and Y\ccskbk{\co{\alpha}\colorkey{i}} Y' }{X\parallel Y \ccskbk{\tau\colorkey{i}} X \parallel Y'} \\
\inferrule*[left=(Res)]{X \ccskbk{\alpha\colorkey{i}} X' \\ \alpha\not\in \{a,\co{a}\}}
{\res{X}{a}\ccskbk{\alpha\colorkey{i}}\res{X'}{a}} \and
\inferrule*[left=(Rec)]{A\stackrel{def}{=}P \and X\ccskbk{\alpha\colorkey{i}}P} {P\ccskfwd{\alpha\colorkey{i}}A}
\end{mathpar}
\caption{Backward LTS of CCSK}
\label{fig:ccskbklts}
\end{figure}
The semantics of CCSK is given by
two LTSs, the \emph{forward} one $(\mathcal{X}^{k}, \mathcal{A}^{\tau}  \times \keys, \ccskfwd{})$
and the \emph{backward} one $(\mathcal{X}^{k}, \mathcal{A}^{\tau}  \times \keys, \ccskbk{})$,
where $\mathcal{X}^{k}$ is the set of CCSK configurations, $\keys$ is the set of keys and
$\ccskfwd{}$ and $\ccskbk{}$ are the least transition relations induced by the rules in Figure~\ref{fig:ccskfwlts} and
Figure~\ref{fig:ccskbklts}, respectively. Since CCSK is reversibile,
for each forward rule there exists a corresponding backward one. The two LTSs use two functions, $\std(X)$ and
$\key(X)$. Intuitively, function $\std(X)$ states that a configuration $X$ has no decorated prefixes (i.e., it does not contain any history information hence it is a CCS process), while function $\key(X)$
returns the set of keys of a given configuration.

\subsection{TPL: Timed Process Language}\label{app:tpl}

\begin{figure}[ht]
\begin{align*}
%& \alpha \in \names ~\text{(action names)} \qquad 
% \pi \in \names \cup \{\sigma\}~\text{(prefix names)} \qquad
% k,i,j \in \keys ~\text{(action keys)} \\
\textrm{(Processes)}\quad & P = \,   \pi.P \sep \colorbox{shade}{$\timed{P}{Q}$} \sep  P + Q \sep P \parallel Q \sep \res{P}{a} \sep A \sep \nil & (\pi =\, \alpha \sep \colorbox{shade}{$\sigma$}  )  
\end{align*}
\caption{Syntax of TPL}
\label{fig:tplsyn}
\end{figure}

TPL \cite{tpl} is an extension of CCS with time. Its syntax is reported in
Figure~\ref{fig:tplsyn}, where the novelties w.r.t.~CCS are enclosed in grey boxes.
TPL adds to CCS two constructs to deal with time: the \emph{idling} prefix $\sigma.P$
and the \emph{timeout} operator $\timed{P}{Q}$. The process $\sigma.P$ acts
as $P$ after having waited one time unit, while the timeout $\timed{P}{Q}$ executes $P$ (if possible)
or $Q$ (in case of a timeout). We indicate with  $\mathcal{A}^{t}$ the set
 $\names \cup \co{\names} \cup \{\tau,\sigma\}$.
The semantics of TPL is given via an LTS and two set of rules:
rules for actions (Figure~\ref{fig:tplsem1}, similar to the CCS rules with the addition of rule
\textsc{Then}), and rules for time (Figure~\ref{fig:tplsem2}), which regulate the behaviour
of the temporal operators $\sigma.P$ and $\timed{P}{Q}$ and the passage of time in the system (e.g., in a parallel composition). Hence, the semantics of TPL is given by the LTS $(\mathcal{P}^t, \mathcal{A}^t, \tplfwd{})$, where
$\mathcal{P}^t$ is the set of TPL processes and $ \tplfwd{}$ is the least relation induced by the rules in
Figures~\ref{fig:tplsem1} and~\ref{fig:tplsem2}.
\begin{figure}[ht]
\begin{mathpar}
\inferrule*[left=(Act)]{}{\alpha.P \tplfwd{\alpha}P} \and
\inferrule*[left=(Sum$_1$)]{P\tplfwd{\alpha} P'}{P+Q \tplfwd{\alpha} P' + Q} \and
\inferrule*[left=(Sum$_2$)]{Q\tplfwd{\alpha} Q'}{P+Q \tplfwd{\alpha} P + Q'} \and
\inferrule*[left=(Then)]{P\tplfwd{\alpha} P'}{\timed{P}{Q}\tplfwd{\alpha} P'} \and
\inferrule*[left=(Com$_1$)]{P\tplfwd{\alpha} P'}{P\parallel Q \tplfwd{\alpha} P' \parallel Q} \and
\inferrule*[left=(Com$_2$)]{Q\tplfwd{\alpha} Q'}{P\parallel Q\tplfwd{\alpha} P \parallel Q'} \and
\inferrule*[left=(Com$_3$)]{P\tplfwd{\alpha} P' \and Q\tplfwd{\co{\alpha}} Q'}{P\parallel Q\tplfwd{\tau} P' \parallel Q'} \\
\inferrule*[left=(Res)]{P\tplfwd{\alpha} P' \and \alpha\not\in \{a,\co{a}\}}{\res{P}{a}\tplfwd{\tau} \res{P'}{a}} \and
\inferrule*[left=(Rec)]{A\stackrel{def}{=}P \and P\tplfwd{\alpha} P' }{A\tplfwd{\alpha} P'} \end{mathpar}
\caption{LTS of TPL: rules for actions}
\label{fig:tplsem1}
\end{figure}

\begin{figure}[ht]
\begin{mathpar}
\inferrule*[left=(Act$_2$)]{}{\alpha.P \tplfwd{\sigma}\alpha.P} \and
\inferrule*[left=(Nil)]{}{\nil \tplfwd{\sigma}\nil} \and
\inferrule*[left=(Wait)]{}{\sigma.P \tplfwd{\sigma}P}  \\
\inferrule*[left=(Sum$_3$)]{P \tplfwd{\sigma}P' \and Q \tplfwd{\sigma}Q'}{P+Q \tplfwd{\sigma} P' + Q'}  \and
\inferrule*[left=(Then$_2$)]{P\not\tplfwd{\tau}}{\timed{P}{Q} \tplfwd{\sigma}Q} \\
\inferrule*[left=(Com$_4$)]{P\tplfwd{\sigma} P' \and Q\tplfwd{\sigma}Q' \and  (P\parallel Q)\not\tplfwd{\tau}}
{P\parallel Q \tplfwd{\sigma}P'\parallel Q'} \and
\inferrule*[left=(Res$_2$)]{P\tplfwd{\sigma} P'}{\res{P}{a}\tplfwd{\sigma} \res{P'}{a}} \and
\inferrule*[left=(Rec$_2$)]{A\stackrel{def}{=}P \and P\tplfwd{\sigma} P' }{A\tplfwd{\sigma} P'} 
\end{mathpar}
\caption{LTS of TPL: rules for time}
\label{fig:tplsem2}
\end{figure}

% !TEX root = main.tex

\section{Encoding of negative premises}
\label{sec:negative}

In this section we show that there exists an encoding of the negative premises in the rules of Figure~\ref{fig:fw} and Figure~\ref{fig:bk} into decidable positive premises.  To do so we compute all the enabled forward prefixes (i.e., barbs) of a configuration and form all the possible pairs of prefixes on the two sides of a parallel operator. We then check whether there exists a pair containing both an action and the corresponding co-action.
The  operator $\synch$ is inductively defined as follows:

\begin{defi}[Synchronisation Operator]
\begin{align*}
& \synch(\alpha.P) =\{ \{ \alpha \}\} &  \synch(\rho\colorkey{i}.X) = \synch(X)\\
&	\synch(X\parallel Y) = \synch(X) \oplus \synch(Y)& \synch(\res{X}{a}) = \synch(X) \setminus \{a\}\\
%&	 \synch(X+Y) = \synch(X) \text{ if } \std(Y) 
  %&	 \synch(\sum_{i\in I} X_i) = \synch(X_j) \text{ if } \forall w\neq j\, \std(X_w)
  & \synch(X+Y) = \synch(X)\cup\synch(Y) \text{ if } \nact(X+Y)\\
  & \synch(X+Y) = \synch(X) \text{ if } \neg\nact(X)
  & \synch(X+Y) = \synch(Y) \text{ if } \neg\nact(Y)\\  
%&	 \synch(\sum_{i\in I} X_i) = \synch(X_j) \text{ if } \exists j. \neg \std(X_j) 
%& \synch(\sum_{i\in I} P_i) = \{\gamma(P_i) \sep i \in I \}\\
&\synch(\timed{X}{Y}) = \synch(X) & \synch(\timedl{X}{Y}{i}) = \synch(X) \\
& \synch(\timedr{X}{Y}{i}) = \synch(Y)
& \synch(A) = \synch(P) \text{ if } A \stackrel{def}{=}P\\
& \synch(\nil) = \emptyset & \synch(\sigma.P) = \emptyset
\end{align*}
where $A\oplus B$ 
is defined as follows:
$$\{A_i \sep i\in I\} \oplus \{B_j \sep j\in J\} = \bigcup_{i\in I, j\in J} \{A_i\cup B_j\} $$
We can then define $X\not \fwd{\tau}$ as:
$$X\not \fwd{\tau} = \forall C \in \synch(X). \forall c_i,c_j\in C \, .   \co{c_i} \neq c_j$$
\end{defi}
The intuition behind the $\synch(\cdot)$ operator is that for sequential processes (i.e., processes which have no top level $\parallel$) it computes a set of singletons of prefixes. Such a set represents the list of all the enabled forward prefixes, which could synchronise in a parallel composition. This is rendered by the rule $\synch(X \parallel Y) = \synch(X)\oplus \synch(Y)$. In this case, via the operator $\oplus$, we compute all the possible pairs of such singletons. Let us note that if we have more than one top level $\parallel$, say $n$, we will have a set of $(n+1)$-uples. Also, since we are using set-based operators, repetitions of prefixes will be dropped.

The $\synch$ operator just collects all the available prefixes, discarding the ones already executed, and the discarded branches.
 For example, let us consider the process $P = a+\co{a}$ we have that
$\synch(P) = \{\{a\},\{\co a\}\}$. Synchronisation is induced by the parallel operator $\parallel$, hence when the $\synch$ operator meets a parallel composition, it computes all the possible pairs via the $\oplus$ operator. We then have a possible synchronisation if there is a pair of the form $\{\alpha,\co{\alpha}\}$. For example, if we take the process $P$ above and we put it in parallel with $a$, the synchronisation operator will compute the following pairs 
$$\synch((a+\co{a}) \parallel a) = \synch(a+\co{a}) \oplus \synch(a) = \{\{a\},\{\co{a}\}\} \oplus \{\{a\}\} = \{\{a,\co{a}\}, \{a,a\}\}$$ We can see that there exists one synchronisation pair. Furthermore, if we consider the process $(b+\co{a}) \parallel a \parallel \co{b}$ we have:
\begin{align*}
&\synch((b+\co{a}) \parallel a \parallel \co{b}) = \synch(b+\co{a}) \oplus \synch(a) \oplus\synch(\co{b}) = \{\{b\},\{\co{a}\}\} \oplus \{\{a\}\} \oplus \{\{\co b\}\} = \\
&\{\{b,a\}, \{\co{a},a\}\}\oplus\{\{\co{b}\}\} =\{\{b,a,\co{b}\}, \{\co{a},a,\co{b}\} \}
\end{align*}
where there are two synchronisation pairs.
\begin{lem}
Given a reachable $\rtpl$ configuration $X$, then $X\not \fwd{\tau}$ is decidable. 
\end{lem}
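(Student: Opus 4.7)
The plan is to reduce decidability of $X \not\fwd{\tau}$ to two properties of the operator $\synch(\cdot)$: that $\synch(X)$ is effectively computable and finite, and that the characterisation $X \not\fwd{\tau}$ iff $\forall C \in \synch(X).\,\forall c_i, c_j \in C.\, \co{c_i} \neq c_j$ is semantically correct. Once both hold, decidability is immediate, since evaluating a pairwise condition over a finite set of finite sets is a decidable check.

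For effective computability of $\synch(X)$, I would proceed by induction on the structure of $X$, unfolding once at the clause for constants $A \stackrel{def}{=} P$. The key observation is that $\synch$ treats every prefix as a leaf, namely $\synch(\alpha.Q) = \{\{\alpha\}\}$ and $\synch(\sigma.Q) = \emptyset$ do not descend into $Q$. Since recursion is guarded (cf.~\cref{sec:tpl}), every occurrence of a process constant in any body sits under a prefix, so unfolding a constant cannot trigger an infinite chain of further unfoldings. All other clauses recurse into strictly smaller subconfigurations and yield either a finite union or a pairwise product via $\oplus$ of finite sets of finite sets of names, ensuring the result is a finite object.

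For the semantic equivalence I would proceed by structural induction on $X$ with a case analysis on the last applied rule of the forward semantics producing a $\tau$ label. A $\tau$ transition can originate only from rule [\textsc{Syn}] at a parallel composition, propagated upward via [\textsc{Act}], [\textsc{Const}], [\textsc{Hide}], [\textsc{Tout}], [\textsc{Wait}], [\textsc{SWait}], and [\textsc{Cho}]. Each propagation is mirrored exactly by the corresponding clause of $\synch$: parallel composition combines barbs via $\oplus$; restriction drops the restricted name; the choice operator collapses to the acted branch when one exists and takes the union of barbs otherwise; and a decorated timeout $\timedl{\cdot}{\cdot}{i}$ or $\timedr{\cdot}{\cdot}{i}$ exposes only its currently active branch. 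Crucially, $\rho\colorkey{i}.X$ inherits the barbs of $X$ unchanged, which is consistent because neither $\ghostlab$ nor consumed time prefixes contribute any fresh communication opportunities.

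The main obstacle is twofold. First, the termination argument for $\synch$ on constants must be tied carefully to the guardedness hypothesis, and in particular to the invariant that reduction of a standard process cannot produce an acted subconfiguration with an unguarded constant exposed at its top level; this requires a small preservation lemma stating that reachable configurations remain guarded. Second, the correctness case analysis requires care at the choice operator: since at most one of $X$, $Y$ may be acted in $X+Y$ (as noted after \cref{def:nact}), the conditional clauses for $\synch(X+Y)$ are well-defined and exhaustive, and match the premises of [\textsc{Cho}]. Apart from these points, the remaining cases are routine verifications against the rules in \cref{fig:fw}.
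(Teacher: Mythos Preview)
Your proposal is correct and follows the same approach as the paper, which dispatches the lemma in one line: ``By a simple induction on the structure of $X$.'' Your two-part decomposition (effective finiteness of $\synch(X)$ and semantic adequacy of the characterisation) is exactly what that one line would unpack to, and both parts are indeed handled by structural induction on $X$.

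One minor remark: the ``preservation lemma'' you flag as an obstacle is not actually needed. Guardedness is a static syntactic constraint on the defining equations $A \stackrel{def}{=} P$, not a dynamic invariant of configurations, so it cannot be lost under reduction. Since $\synch$ stops at every prefix and constants only unfold to processes whose recursive occurrences are under prefixes, termination of $\synch$ follows directly from the assumption without any auxiliary result about reachable configurations.
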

\begin{proof}
By a simple induction on the structure of $X$.
\end{proof}

\section{Additional proofs of Section~\ref{sec:causality}}\label{sec:proofs}
% !TEX root = main.tex

\squareproperty*
\begin{proof}
The proof is by case analysis on the direction of the two transitions. We distinguish three cases according to whether the two transitions are both forwards, both backwards, or one forwards and the other backwards.

\begin{description}
\item[$t$ and $s$ forwards] 
% both alpha
  first we look at the case where the two actions are both communication actions. The proof is by induction on the structure of the common source configuration $X$. From $\nil$ no transition is possible hence this case can never happen. For a standard prefix, a single transition is possible, but for the choice of the key. Two transitions with the same key are not independent due to condition~\ref{confl:key} in Definition~\ref{def:confl}, hence there is nothing to prove. The cases of non-standard prefixes, timeout (both standard and non standard), hiding and constants follow by inductive hypothesis.
  If the configuration is a parallel composition then either we apply rule [\textsc{Par}] and its symmetric, or we apply rule [\textsc{Par}] and rule [\textsc{Syn}], or two [\textsc{Syn}].
  In the case of two applications of rule $[\textsc{Par}]$, if the same parallel component acts in both the cases, then the thesis follows from inductive hypothesis.
Otherwise we have
$X = X_1 \parallel X_2 \fwdns{\alpha_1\colorkey{i}} Y_1 \parallel X_2 = Y$ and $X_1 \parallel X_2 \fwdns{\alpha_2\colorkey{j}} X_1 \parallel Z_2 = Z$ with premises 
$X_1 \fwdns{\alpha_1\colorkey{i}} Y_1$ and $X_2 \fwdns{\alpha_2\colorkey{j}} Z_2$ (note that $i\not = j$ by case \ref{confl:key} of Definition~\ref{def:confl}).
By applying rule [\textsc{Par}] again we have
$Y = Y_1 \parallel X_2 \fwdns{\alpha_2\colorkey{j}} Y_1 \parallel Z_2$ and 
$Z = X_1 \parallel Z_1 \fwdns{\alpha_1\colorkey{i}} Y_1 \parallel Z_2$, as desired.
In the case of two applications of rule [\textsc{Syn}] we proceed by induction on the two components.
In the case of one [\textsc{Par}] and one [\textsc{Syn}] we proceed by induction on the component which acts in [\textsc{Par}].

In the case of choice, if the two transitions concern the same component then the thesis follows by inductive hypothesis. If the two transitions concern different components then they are not independent as they are in forward communication conflict, according to condition~\ref{confl:consumed} in Definition~\ref{def:confl}.

In the case $\pi_1 = \sigma$ and $\pi_2 = \sigma$ there is nothing to prove since then $Y=Z$ and $t=s$ by time determinism.

Finally, note that the case $\pi_1 = \alpha$ and $\pi_2 = \sigma$ is ruled out by Definition~\ref{def:confl} (first clause).
\item[$t$ and $s$ backwards] the case of two communication actions is
  similar to the previous one, noticing however that backward
  transitions are never in conflict according to
  Definition~\ref{def:confl}. Indeed, keys ensure that each component can
  take part in a single transition.

  The case of two $\sigma$ actions follows since time determinism
  holds also in the backward direction, since time actions are
  required on all components, and they need to have the same key.

  The case of a $\sigma$ action and a communication action follows
  from Proposition~\ref{prop:backalphasigma}.
  
\item[$t$ forwards and  $s$ backwards]
the case of communication actions is similar to the previous one: actions either are in conflict by condition \ref{confl:bkfwd} of Definition~\ref{def:confl} or they are generated by parallel components, hence can take place also in the opposite order.
  Two time transitions from the same configuration are always in conflict by condition \ref{confl:bkfwd} of Definition~\ref{def:confl}, since time actions are recorded in each component.
The case $\pi_1 = \alpha$ and $\pi_2 = \sigma$ (or vice versa) is analogous to the previous one. \qedhere
\end{description}
\end{proof}

\end{document}